\newtheorem{assumption}{\bf Assumption}
\newtheorem{theorem}{\bf Theorem}
\newtheorem{proposition}{\bf Proposition}
\newtheorem{lemma}{\bf Lemma}
\newtheorem{remark}{\bf Remark}
\newcommand{\norm}[1]{\left\lVert#1\right\rVert}
\newcommand{\myset}[2]{\left\{#1\,\left|\, #2\right.\right\}}
\newcommand{\nnorm}[1]{\lVert#1\rVert}
\newcommand\copyrighttext{%
  \scriptsize \textcopyright 2026 IEEE. Personal use of this material is permitted.
  Permission from IEEE must be obtained for all other uses, in any current or future
  media, including reprinting/republishing this material for advertising or promotional
  purposes, creating new collective works, for resale or redistribution to servers or
  lists, or reuse of any copyrighted component of this work in other works.}
\newcommand\copyrightnotice{%
\begin{tikzpicture}[remember picture,overlay]
\node[anchor=south,yshift=10pt] at (current page.south) 
  {\fbox{\parbox{\dimexpr\textwidth-\fboxsep-\fboxrule\relax}{\copyrighttext}}};
\end{tikzpicture}%
}
\begin{document}
\title{Stochastic MPC with Online-optimized Policies and Closed-loop Guarantees}
\author{
Marcell Bartos, Alexandre Didier, Jerome Sieber, Johannes K\"ohler$^\dagger$, Melanie N. Zeilinger$^\dagger$
\thanks{$^\dagger$: joint supervision. All authors are with the Institute for Dynamic Systems and Control, ETH Zurich, 8092 Zürich, Switzerland (e-mail: [mbartos, adidier, jsieber, mzeilinger]@ethz.ch, j.kohler@imperial.ac.uk). Johannes K\"ohler has been supported by the Swiss National Science Foundation under NCCR Automation (grant agreement 51NF40\textunderscore180545). This work was supported as a part of NCCR Automation, a National Centre of Competence in Research, funded by the Swiss National Science Foundation (grant number 51NF40\textunderscore225155).}
}

\maketitle
\copyrightnotice
\thispagestyle{empty}
\begin{abstract} 
This paper proposes a stochastic model predictive control method for linear systems affected by additive Gaussian disturbances that optimizes over disturbance feedback matrices online. Closed-loop satisfaction of probabilistic constraints and recursive feasibility of the underlying convex optimization problem is guaranteed. Optimization over feedback policies online increases performance and reduces conservatism compared to fixed-feedback approaches. The central mechanism is a finitely determined maximal admissible set for probabilistic constraints, together with the reconditioning of the predicted probabilistic constraints on the current knowledge at every time step. The proposed method's applicability is demonstrated on a building temperature control example.
\end{abstract}
\begin{IEEEkeywords}
Predictive control for linear systems, chance constraints, stochastic optimal control, constrained control
\end{IEEEkeywords}

\section{Introduction} \label{sec:intro}
Model predictive control (MPC)~\cite{rawlings2017model, kouvaritakis2016model} is a powerful approach for the optimal control of constrained systems, by solving finite horizon optimization problems in a receding horizon manner. For deterministic systems, MPC offers strong theoretical guarantees in terms of constraint satisfaction, stability, and performance.
However, ensuring constraint satisfaction for stochastic dynamical systems is challenging. To address this issue, stochastic MPC (SMPC) schemes~\cite{farina2016stochastic, mesbah2016stochastic} consider a probabilistic description of the uncertainty and enforce safety-critical constraints with a user-chosen probability.
In this work, we propose an SMPC scheme for linear time-invariant systems affected by additive Gaussian disturbances that
\begin{enumerate}[label=\emph{\alph*})]
    \item improves performance by using online-optimized affine feedback policies while preserving convexity,\label{req:optimized_feedback}
    \item and guarantees recursive feasibility and satisfaction of the chance constraints in closed loop.\label{req:guarantees}
\end{enumerate}

\subsubsection*{Related work} 
If the support of the distribution of the disturbance is bounded, recursive feasibility of SMPC optimization problems can be ensured by adopting robust constraint tightening approaches, see, e.g.,~\cite{cannon2010stochastic, lorenzen2016constraint}. However, in the unbounded support case (such as the Gaussian distribution considered in this paper), guaranteeing recursive feasibility becomes nontrivial~\cite{farina2016stochastic, ono2012joint}.
A common solution is to use recovery mechanisms~\cite{farina2013probabilistic}, or soften the constraints~\cite{paulson2020stochastic}, but this often leads to the loss of closed-loop chance constraint satisfaction guarantees, unless stronger assumptions are used~\cite{hewing2018stochastic, kohler2022recursively, schluter2022stochastic}.
In contrast, the methods of~\cite{hewing2020recursively, arcari2023stochastic, mark2024stochastic} achieve property~\ref{req:guarantees} via the so-called indirect feedback paradigm. These methods do not initialize the predicted trajectories that enter the constraints of the optimization problem based on the current measurement of the state, i.e., they do not \emph{recondition} on current knowledge. Instead, they use the shifted optimal nominal state from the previous solution for initialization, and use a separate predicted trajectory for the cost function, which is initialized by the true state.
All of the aforementioned methods that achieve~\ref{req:guarantees} rely on offline-designed feedback policies to construct probabilistic reachable sets, and thus they do not offer~\ref{req:optimized_feedback}, which limits their performance.
\par As for requirement~\ref{req:optimized_feedback}, the advantages of optimizing over the feedback in terms of improved performance and reduced conservatism have been shown in numerous prior works, including~\cite{lofberg2003approximations, oldewurtel2008tractable, sieber2021system, leeman2025robust}. Optimization over linear state feedback policies in MPC can be written as a convex problem using a parameterization in terms of disturbance feedback~\cite{goulart2006optimization} or equivalently by using the system level parameterization~\cite{anderson2019system}.
Corresponding SMPC schemes with online optimized policies have been proposed in~\cite{oldewurtel2008tractable, prandini2012randomized, mark2022recursively, pan2023data, li2024distributionally, knaup2024recursively}.
However,~\cite{oldewurtel2008tractable} can only guarantee recursive feasibility for bounded disturbances,~\cite{prandini2012randomized} does not provide recursive feasibility guarantees, and~\cite{mark2022recursively, pan2023data, li2024distributionally, knaup2024recursively} utilize a recovery mechanism to ensure recursive feasibility, but they do not guarantee chance constraint satisfaction for the closed-loop system.
That is, while these methods achieve~\ref{req:optimized_feedback}, they do not achieve~\ref{req:guarantees}. 
\par Overall, to the authors' best knowledge, there does not exist an SMPC scheme in the literature that can address requirements~\ref{req:optimized_feedback}--\ref{req:guarantees} simultaneously.
Furthermore, existing SMPC schemes with closed-loop guarantees cannot be directly extended to use online-optimized feedback.
In particular, the constraint satisfaction guarantees of~\cite{hewing2020recursively, arcari2023stochastic, mark2024stochastic} heavily rely on the independent evolution of the closed-loop error from the optimization problem, while unimodality of its distribution is utilized in~\cite{hewing2018stochastic, kohler2022recursively, schluter2022stochastic}. In case of receding horizon SMPC using online-optimized feedback, the feedback matrices become random variables, meaning that the error is neither unimodal nor independent from the optimization problem. This highlights that achieving~\ref{req:optimized_feedback}--\ref{req:guarantees} simultaneously in the presence of probabilistic constraints and unbounded disturbances poses a significant challenge. Note that this a fundamental difference compared to robust MPC approaches, where extending fixed feedback methods to use online-optimized feedback while retaining the desired closed-loop guarantees is straightforward~\cite{goulart2006optimization}.

\subsubsection*{Contribution}
This paper proposes an SMPC scheme for linear systems affected by additive Gaussian disturbances that uses online-optimized feedback policies, while ensuring recursive feasibility, closed-loop chance constraint satisfaction, and bounded asymptotic average cost. Despite the presence of unbounded disturbances, we guarantee recursive feasibility by appropriately relaxing the constraints in the optimization problem using the idea of reconditioning, while still satisfying the original probabilistic constraints in closed loop. This is achieved via three technical contributions.
\begin{enumerate}
    \item \label{contr:1} We propose a general receding horizon reconditioning SMPC formulation, based on the shrinking horizon formulation of~\cite{wang2021recursive}, where the probabilistic constraints are reconditioned on all past disturbances at every time step.
    \item \label{contr:2} Inspired by~\cite{li2021chance}, we construct a finitely determined maximal admissible set for probabilistic constraints that can function as the terminal set of SMPC methods that utilize online-optimized feedback in the prediction horizon.
    \item \label{contr:3} We formulate a tractable SMPC scheme for linear systems affected by additive i.i.d. Gaussian disturbances that optimizes over linear causal feedback policies online (achieving~\ref{req:optimized_feedback}), while also ensuring~\ref{req:guarantees}.
\end{enumerate}

\subsubsection*{Outline}
Section~\ref{sec:preliminaries} contains the problem setup and preliminaries on convex optimization over feedback policies, while Section~\ref{sec:concept} introduces the conceptual proposed SMPC scheme and provides a roadmap for the rest of the paper. In Section~\ref{sec:DMC}, we formulate a finite horizon optimal control problem that yields a sequence of optimized policies that guarantee chance constraint satisfaction for all times. Then, we present a general receding horizon SMPC framework using the concept of reconditioning in Section~\ref{sec:reconditioning}. Building on the previous two sections, Section~\ref{sec:RHC} contains the proposed receding horizon scheme that re-solves the optimization problem at every time step, along with a theoretical closed-loop analysis. Then, the proposed method's applicability is demonstrated on a building temperature control example in Section~\ref{sec:numerical}, and Section~\ref{sec:conclusion} concludes the paper. Some proofs and auxiliary results can be found in the Appendix.

\section{Setup and Preliminaries} \label{sec:preliminaries}
After a summary of the used notation, the problem setup is presented, which is followed by a brief summary of relevant results from system level synthesis.

\subsection{Notation}
The set of positive (non-negative) reals is denoted by $\mathbb{R}_{> 0}$ ($\mathbb{R}_{\geq 0}$), the set of integers bigger (not smaller) than $a \in \mathbb{R}$ is denoted by $\mathbb{I}_{> a}$ ($\mathbb{I}_{\geq a}$), and the set of integers in the interval $[a, b]$ is denoted by $\mathbb{I}_{[a, b]}$. The interior of set $\mathcal{A}$ is denoted by $\mathrm{int}(\mathcal{A})$.
$I_n$ represents the $n\times n$ identity matrix, positive (semi-)definiteness of a matrix $A$ is indicated by $A\succ(\succeq)\ 0$, and the Kronecker product of two matrices is denoted by $\otimes$. For a square matrix $A$, $\mathrm{tr}(A)$ and $\rho(A)$ denote its trace and spectral radius, respectively. The 2-norm of vector $x$ is denoted by $\norm{x}$, while $\norm{x}_P := \sqrt{x^\top P x}$ for $P = P^\top \succeq 0$. For a real matrix $B$, $\norm{B} := \sqrt{\rho(B^\top B)}$. For $0 \preceq \Sigma = \Sigma^\top \in \mathbb{R}^{n\times n}$, $\Sigma^{1/2}$ is the unique matrix that satisfies $\Sigma^{1/2} = (\Sigma^{1/2})^\top \succeq 0,\ \Sigma^{1/2}\Sigma^{1/2} = \Sigma$. For a sequence of vectors $\{a_i\}$, $a_{j:k}$ denotes the stacked vector $\begin{bmatrix} a_j^\top & ... & a_k^\top\end{bmatrix}^\top$, and we use the convention that $\sum_{i=j}^k a_i = 0$, whenever $k<j$. The vectorization of a matrix $B \in \mathbb{R}^{n\times m}$ is denoted by $\mathrm{vec}(B) \in \mathbb{R}^{nm}$, which is obtained by stacking the columns of $B$ on top of one another. We sometimes use the abbreviation $XAX^\top = XA[*]^\top$ for symmetric matrix expressions. If $x$ is a decision variable in an optimization problem, $x^\star$ denotes its optimal value.
The probability of some event $A$ and its conditional probability conditioned on event $B$ is denoted by $\mathrm{Pr}[A]$ and $\mathrm{Pr}[A | B]$, respectively. Similarly, for random variables $x$ and $y$, $\mathbb{E}[x]$, $\mathbb{E}[x|y]$, $\mathrm{Var}[x]$, $\mathrm{Var}[x|y]$, $p(x)$, $p(x|y)$ refer to the (conditional) expected value, variance, and probability density function, respectively. The multivariate normal distribution with mean $\mu \in \mathbb{R}^n$ and variance $0 \preceq \Sigma = \Sigma^\top \in \mathbb{R}^{n\times n}$ is denoted by $\mathcal{N}(\mu, \Sigma)$, while $\chi^2(\cdot)$ denotes the inverse cumulative distribution function of the chi-squared distribution with a single degree of freedom.

\subsection{Problem Setup}
We consider a discrete-time linear time-invariant system that is affected by additive independent and identically distributed (i.i.d.) zero-mean Gaussian disturbances with variance $\Sigma^\mathrm{w}\succ~0$:
\begin{equation} \label{eq:system}
    x_{k+1} = A x_k + B u_k + w_k, \quad w_k~\sim~\mathcal{N}(0, \Sigma^\mathrm{w})
\end{equation}
with state $x_k \in \mathbb{R}^n$, input $u_k \in \mathbb{R}^m$, disturbance $w_k \in \mathbb{R}^n$, time $k \in \mathbb{I}_{\geq 0}$, and stabilizable pair $(A,B)$. We assume that the state can be directly measured. The system is subject to half-space chance constraints
\begin{equation} \label{eq:chance_constraints}
    \mathrm{Pr}[(x_k, u_k) \in \mathcal{C}^j] \geq p^j,\ \forall j \in \mathbb{I}_{[1,c]},\ \forall k \in \mathbb{I}_{\geq 0},
\end{equation}
with user-defined probability $p^j \in (0.5,1)$ and
\begin{equation*}
    \mathcal{C}^j = \myset{(x,u) \in \mathbb{R}^n \times \mathbb{R}^m}{G_jx + H_ju \leq b_j },
\end{equation*}
where $G_j$, $H_j$, and $b_j$ denote the $j$-th row of $G \in \mathbb{R}^{c\times n}$, $H \in \mathbb{R}^{c\times m}$, and $b \in \mathbb{R}^{c}_{> 0}$, respectively. The user-defined probability levels $p^j$ allow for a trade-off between performance and safety, although there is a maximal level above which the problem is infeasible (cf. Assumption~\ref{ass:L}), in which case $p^j$ needs to be decreased. Note that the proposed method is unable to handle hard input constraints, which is a general limitation of constrained stochastic control methods in the presence of disturbances with unbounded support.

Consider the linear-quadratic stage cost
\begin{equation} \label{eq:ell}
    \ell(x, u) = \norm{x}^2_Q + q^\top x+\norm{u}^2_R+r^\top u,
\end{equation}
where $q \in \mathbb{R}^n$, $r \in \mathbb{R}^m$, $Q \in\mathbb{R}^{n\times n}$, $R \in\mathbb{R}^{m\times m}$, and $Q, R \succeq 0$, and $\ell$ admits a uniform lower bound. The goal is to satisfy~\eqref{eq:chance_constraints} while minimizing the expected cumulative cost~\eqref{eq:ell} over an infinite horizon. The corresponding stochastic optimal control problem is intractable in general~\cite{kouvaritakis2016model} due to the presence of an infinite number of optimization variables and constraints, and due to the fact that it involves optimizing over arbitrary causal control policies.

In order to overcome these challenges, we restrict the considered policy class to disturbance affine feedback laws (see Section~\ref{sec:SLP}), and propose a stochastic MPC scheme that approximates the stochastic optimal control problem by repeatedly solving a finite horizon optimal control problem in a receding horizon manner (see Section~\ref{sec:concept}).

\subsection{Optimized Feedback via System Level Parameterization} \label{sec:SLP}
In the following, we review how the system level parameterization~\cite{anderson2019system} (or equivalently disturbance feedback~\cite{goulart2006optimization}) has been used in the literature to propagate uncertainty and optimize feedback policies in a convex manner.
\par Consider a finite horizon with length $N \in \mathbb{I}_{> 0}$. In order to address the presence of uncertainty, we employ a common strategy in MPC by separating the system into nominal and error dynamics. Define the nominal system that evolves in a disturbance-free manner as
\begin{equation*}
    z_{k+1} = A z_k + B v_k,
\end{equation*}
initialized at $z_0 = x_0$, where $z_k \in \mathbb{R}^n$ and $v_k \in \mathbb{R}^m$ denote the nominal state and input at time $k$, respectively. After defining the error system
\begin{equation*}
    e^\mathrm{x}_{k+1} = A e^\mathrm{x}_k + B e^\mathrm{u}_k + w_k,\  e^\mathrm{x}_0 = 0,
\end{equation*}
where $x_k = z_k + e^\mathrm{x}_k$ and $u_k = v_k + e^\mathrm{u}_k$, we consider the feedback parameterization
\begin{equation} \label{eq:e^u_k}
    e^\mathrm{u}_k = \sum_{i=1}^k \Phi^\mathrm{u}_{k,i} w_{i-1},
\end{equation}
in which case the resulting error trajectory can be expressed as
\begin{equation} \label{eq:e^x_k}
    e^\mathrm{x}_k = \sum_{i=1}^k \Phi^\mathrm{x}_{k,i} w_{i-1}.
\end{equation}
The system response $\Phi^\mathrm{x}_{k,i}$ evolves according to
\begin{equation} \label{eq:SLP}
    \mathbf{\Phi}^\mathrm{x}_1 = I_n,\quad \mathbf{\Phi}^\mathrm{x}_{k+1} = \begin{bmatrix}
        A \mathbf{\Phi}^\mathrm{x}_k + B \mathbf{\Phi}^\mathrm{u}_k && I_n
    \end{bmatrix},\ \forall k \in \mathbb{I}_{[1,N-1]},
\end{equation}
with block matrices
\begin{align*}
    \mathbf{\Phi}^\mathrm{x}_k &= \begin{bmatrix} \Phi^\mathrm{x}_{k,1} & ... & \Phi^\mathrm{x}_{k,k}\end{bmatrix} \in \mathbb{R}^{n \times kn},\\ 
    \mathbf{\Phi}^\mathrm{u}_k &= \begin{bmatrix} \Phi^\mathrm{u}_{k,1} & ... & \Phi^\mathrm{u}_{k,k}\end{bmatrix} \in \mathbb{R}^{m \times kn}.
\end{align*}

Constraint~\eqref{eq:SLP} is commonly referred to as the system level parameterization (SLP)~\cite{anderson2019system}. As it is an affine constraint, we can leverage it to derive a convex optimization problem that optimizes over disturbance feedback matrices.

Throughout the paper, we make use of the SLP to formulate SMPC optimization problems. Although applying the SLP in an MPC problem is straightforward, the difficulty comes from ensuring recursive feasibility and closed-loop chance constraint satisfaction despite the presence of unbounded disturbances and online-optimized feedback.

\section{Stochastic MPC with Online-optimized Feedback: The Conceptual Idea} \label{sec:concept}

In this section, we provide a roadmap for the rest of the paper, and propose the conceptual stochastic MPC problem.

We leverage the system level parameterization presented in~\ref{sec:SLP} to jointly optimize over the disturbance feedback matrices and the nominal trajectories inside the SMPC problem. As customary in MPC, the finite horizon problem is repeatedly solved every time step in a receding horizon manner. In order to ensure the desired infinite horizon closed-loop guarantees, we rely on a custom-designed terminal set and appropriate modification of the constraints.

The proposed SMPC optimization is of the form
\begin{subequations} \label{eq:conceptual-SMPC}
    \begin{align}
        \min_{\substack{\mathbf{z}_k, \mathbf{v}_k\\ \mathbf{\Phi}^\mathrm{x}_{:|k}, \mathbf{\Phi}^\mathrm{u}_{:|k}}} \quad &J(\mathbf{z}_k, \mathbf{v}_k, \mathbf{\Phi}^\mathrm{x}_{:|k}, \mathbf{\Phi}^\mathrm{u}_{:|k})  \label{eq:conceptual-SMPC:cost}\\
        \text{s.t.} \quad\ \ 
        &z_{0|k} = x_k, \label{eq:conceptual-SMPC:init}\\
        &z_{i+1|k} = A z_{i|k} + B v_{i|k},\\
        &(z_{i|k}, v_{i|k}, \mathbf{\Phi}^\mathrm{x}_{i|k}, \mathbf{\Phi}^\mathrm{u}_{i|k}) \in \mathcal{C}^j_{i|k}, \label{eq:conceptual-SMPC:constraints_N}\\
        &(z_{N|k}, \mathbf{\Phi}^\mathrm{x}_{N|k}) \in \mathcal{C}_{N|k}, \label{eq:conceptual-SMPC:constraints_terminal}\\
        &\text{the SLP~\eqref{eq:SLP} is satisfied}, \\
        &\forall{j} \in \mathbb{I}_{[1,c]},\ \forall i \in \mathbb{I}_{[0,N-1]}.
    \end{align}
\end{subequations}
For predictions computed at time step $k \in \mathbb{I}_{\geq 0}$, we use the notation $\mathbf{z}_k = z_{0:N|k}$, $\mathbf{v}_k = v_{0:N-1|k}$. Furthermore, define $\mathbf{\Phi}^\mathrm{x}_{:|k}$ and $\mathbf{\Phi}^\mathrm{u}_{:|k}$ as block-lower-triangular matrices, where the $i$-th row of the lower triangular part is $\mathbf{\Phi}^\mathrm{x}_{i|k}$ and $\mathbf{\Phi}^\mathrm{u}_{i|k}$, respectively.
That is, the SMPC~\eqref{eq:conceptual-SMPC} initializes the predicted state trajectory $\mathbf{z}_k$ with the current state $x_k$, and makes use of the SLP presented in Section~\ref{sec:SLP} to propagate uncertainty and optimize over the feedback matrices online. The objective function~\eqref{eq:conceptual-SMPC:cost} is a suitable quadratic function defined later in Section~\ref{sec:FHOCP} (Eq.~\eqref{eq:J}).

In the following sections, we detail how we build up Problem~\eqref{eq:conceptual-SMPC} and the corresponding theoretical properties.
First, Section~\ref{sec:DMC} shows how to formulate a finite horizon optimal control problem that will serve as the initial SMPC problem at time step $k=0$. In order to achieve infinite horizon guarantees in terms of closed-loop chance constraint satisfaction, we design the terminal set in Section~\ref{sec:Mode_2}, which is our first main contribution. Then, when the problem is re-solved at time step $k>0$, the constraints of~\eqref{eq:conceptual-SMPC} have to be adapted to ensure both recursive feasibility and closed-loop chance constraint satisfaction. This is achieved via an intermediate result in Section~\ref{sec:reconditioning}, where we leverage the idea of reconditioning the chance constraints on all information available up to the current time step to design recursively feasible SMPC algorithms. Finally, this allows us to bring everything together and define the receding horizon SMPC algorithm in Section~\ref{sec:RHC}. 

Note that existing SMPC methods with online-optimized feedback (e.g.~\cite{oldewurtel2008tractable, mark2022recursively}) can also be represented in the form of~\eqref{eq:conceptual-SMPC}. The main difference compared to them is the choice of the constraints~\eqref{eq:conceptual-SMPC:constraints_N}-\eqref{eq:conceptual-SMPC:constraints_terminal}. By appropriately defining constraint sets $\mathcal{C}^j_{i|k}$ and $\mathcal{C}_{N|k}$, we can guarantee recursive feasibility of~\eqref{eq:conceptual-SMPC} and satisfaction of the chance constraints~\eqref{eq:chance_constraints} for the resulting closed-loop system.

\section{Finite Horizon Stochastic Optimal Control Using Optimized Feedback}
\label{sec:DMC}

In the following, we present the finite horizon optimal control problem that comes with chance constraint satisfaction guarantees for the infinite horizon. This will serve as the optimization problem at $k=0$ for the final SMPC scheme presented in Section~\ref{sec:RHC}.

The prediction horizon is divided into two parts. In the first part, a finite prediction horizon $N\in\mathbb{I}_{>0}$ is used over which the affine policy is optimized (see Section~\ref{sec:Mode_1}). As for the infinite-horizon tail, a fixed linear feedback is applied, which, using the maximal probabilistic admissible set designed in Section~\ref{sec:Mode_2}, is able to guarantee chance constraint satisfaction for the infinite horizon. The resulting finite horizon optimal control problem is summarized in Section~\ref{sec:FHOCP}.

\subsection{Probabilistic Constraints with Optimized Feedback} \label{sec:Mode_1}

First, we show how we can reformulate the original chance constraints~\eqref{eq:chance_constraints} to optimize over affine feedback policies in the first $N$ steps of the prediction horizon. Consider system~\eqref{eq:system} subject to chance constraints~\eqref{eq:chance_constraints}, and the SLP presented in Section~\ref{sec:SLP}:
\begin{equation} \label{eq:SLP:ux}
    \begin{split}
    u_{i|0} &= v_{i|0} + e^\mathrm{u}_{i|0},\ \forall i \in \mathbb{I}_{[0,N-1]},\\
    x_{i|0} &= z_{i|0} + e^\mathrm{x}_{i|0},\ \forall i \in \mathbb{I}_{[0,N]},
    \end{split}
\end{equation}
where $e^\mathrm{u}_{i|0}$ and $e^\mathrm{x}_{i|0}$ are defined in~\eqref{eq:e^u_k},~\eqref{eq:e^x_k}. Due to the i.i.d. Gaussian assumption on the disturbance sequence, $x_{i|0}$ and $u_{i|0}$ are Gaussian random variables with means
\begin{align*}
    \mathbb{E}[u_{i|0}] &= v_{i|0},\ \forall i \in \mathbb{I}_{[0,N-1]},\\
    \mathbb{E}[x_{i|0}] &= z_{i|0},\ \forall i \in \mathbb{I}_{[0,N]},
\end{align*}
and the variance of the joint distribution is given by
\begin{equation*}
    \mathrm{Var}[[x_{i|0}^\top, u_{i|0}^\top]^\top] = \begin{bmatrix}
        \mathbf{\Phi}^\mathrm{x}_{i|0} \\ \mathbf{\Phi}^\mathrm{u}_{i|0}
    \end{bmatrix} \mathbf{\Sigma}^\mathrm{w}_i \begin{bmatrix}
        (\mathbf{\Phi}^{\mathrm{x}}_{i|0})^\top & (\mathbf{\Phi}^{\mathrm{u}}_{i|0})^\top
    \end{bmatrix},
\end{equation*}
$\forall i \in \mathbb{I}_{[0,N-1]}$, where $\mathbf{\Sigma}^\mathrm{w}_i = I_i \otimes \Sigma^\mathrm{w}$. 
Next, we provide an exact reformulation of the probabilistic half-space constraints~\eqref{eq:chance_constraints} by extending~\cite[Sec.~3.2]{hewing2020recursively} to handle optimized feedback matrices.
\begin{lemma} \label{lem:SOC_reform}
    Consider the parameterization~\eqref{eq:SLP:ux}. Then, the chance constraint
    \begin{equation} \label{eq:chance_constraint_j}
        \mathrm{Pr}[(x_{i|0}, u_{i|0}) \in \mathcal{C}^j] \geq p^j
    \end{equation}
    is equivalent to the second-order cone constraint
    \begin{equation} \label{eq:SOC_k=0_Psi}
    \begin{split}
         G_j z_{i|0} &+ H_j v_{i|0} \leq b_j \\&- \sqrt{\tilde{p}^j} \norm{(\mathbf{\Sigma}^\mathrm{w}_i)^{1/2} \begin{bmatrix} (\mathbf{\Phi}^{\mathrm{x}\top}_{i|0})^\top & (\mathbf{\Phi}^{\mathrm{u}}_{i|0})^\top\end{bmatrix} \begin{bmatrix} G_j^\top \\ H_j^\top \end{bmatrix}}
    \end{split}
    \end{equation}
    which is further equivalent to
    \begin{equation} \label{eq:SOC_k=0}
    \begin{split}
        G_j z_{i|0} &+ H_j v_{i|0} \leq b_j \\ &- \sqrt{\tilde{p}^j} \norm{\left(\begin{bmatrix} G_j & H_j \end{bmatrix} \otimes (\mathbf{\Sigma}^\mathrm{w}_i)^{1/2}\right) \psi_{i|0}}.
    \end{split}
    \end{equation}
     where $\psi_{i|0} := \mathrm{vec}\left(\begin{bmatrix} (\mathbf{\Phi}^{\mathrm{x}}_{i|0})^\top & (\mathbf{\Phi}^{\mathrm{u}}_{i|0})^\top\end{bmatrix}\right)$ and $\tilde{p}^j := \chi^2(2p^j-1)$.
\end{lemma}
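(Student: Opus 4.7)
The proof is essentially a standard chance-constraint reformulation, so the plan is to carry it out in two clean steps matching the two equivalences claimed.

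First, I would note that because $w_{0:i-1}$ is i.i.d.\ Gaussian and~\eqref{eq:SLP:ux} together with~\eqref{eq:e^u_k}--\eqref{eq:e^x_k} makes $(x_{i|0},u_{i|0})$ an affine function of $w_{0:i-1}$, the scalar random variable $Y := G_j x_{i|0} + H_j u_{i|0}$ is Gaussian. Using the previously stated formulas for the mean and joint variance of $(x_{i|0},u_{i|0})$, its mean is $\mu = G_j z_{i|0} + H_j v_{i|0}$ and its variance is
\begin{equation*}
\sigma^2 = \begin{bmatrix} G_j & H_j\end{bmatrix}\begin{bmatrix}\mathbf{\Phi}^{\mathrm{x}}_{i|0}\\ \mathbf{\Phi}^{\mathrm{u}}_{i|0}\end{bmatrix}\mathbf{\Sigma}^{\mathrm{w}}_i[*]^\top = \nnorm{\mathbf{\Sigma}^{\mathrm{w},1/2}_i\begin{bmatrix}(\mathbf{\Phi}^{\mathrm{x}}_{i|0})^\top & (\mathbf{\Phi}^{\mathrm{u}}_{i|0})^\top\end{bmatrix}\begin{bmatrix}G_j^\top\\ H_j^\top\end{bmatrix}}^2.
\end{equation*}
Then the chance constraint~\eqref{eq:chance_constraint_j} is exactly $\mathrm{Pr}[Y\le b_j]\ge p^j$, which, for a univariate Gaussian, is equivalent to $\mu + \Phi^{-1}(p^j)\sigma \le b_j$, where $\Phi^{-1}$ is the standard normal quantile function.

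The only nontrivial bookkeeping is identifying $\Phi^{-1}(p^j)$ with $\sqrt{\tilde{p}^j}$. I would justify this by recalling that a $\chi^2_1$ random variable is distributed as $Z^2$ with $Z\sim\mathcal{N}(0,1)$, so its CDF at $x\ge 0$ equals $2\Phi(\sqrt{x}) - 1$. Setting this equal to $2p^j - 1$ gives $\sqrt{\chi^2(2p^j-1)} = \Phi^{-1}(p^j)$, which is positive since $p^j > 0.5$. Substituting this into the reformulated Gaussian inequality yields~\eqref{eq:SOC_k=0_Psi}.

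For the equivalence of~\eqref{eq:SOC_k=0_Psi} and~\eqref{eq:SOC_k=0}, I would apply the Kronecker vectorization identity $\mathrm{vec}(AXB) = (B^\top\otimes A)\mathrm{vec}(X)$. With $A = \mathbf{\Sigma}^{\mathrm{w},1/2}_i$, $X = \begin{bmatrix}(\mathbf{\Phi}^{\mathrm{x}}_{i|0})^\top & (\mathbf{\Phi}^{\mathrm{u}}_{i|0})^\top\end{bmatrix}$, and $B = [G_j^\top;\, H_j^\top]$, the vector inside the norm in~\eqref{eq:SOC_k=0_Psi} equals its own vectorization, which is $([G_j\ H_j]\otimes \mathbf{\Sigma}^{\mathrm{w},1/2}_i)\psi_{i|0}$. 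Since the Euclidean norm is invariant under vectorization, this gives~\eqref{eq:SOC_k=0}.

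I do not expect any real obstacle here; the only subtlety is the $\chi^2_1$--standard-normal identification used to match the paper's notation $\tilde{p}^j = \chi^2(2p^j-1)$. Everything else is routine application of Gaussian chance-constraint reformulation and a single Kronecker identity.
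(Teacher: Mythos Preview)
Your proposal is correct and follows essentially the same route as the paper: first reduce the half-space chance constraint on a Gaussian to the deterministic inequality $\mu + \sqrt{\chi^2(2p^j-1)}\,\sigma \le b_j$, then apply the vectorization identity $\mathrm{vec}(AXB) = (B^\top\otimes A)\mathrm{vec}(X)$ to pass from~\eqref{eq:SOC_k=0_Psi} to~\eqref{eq:SOC_k=0}. The only difference is that you spell out the $\chi^2_1$--standard-normal quantile identification explicitly, whereas the paper simply cites the resulting reformulation.
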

\begin{proof}
    The proof is similar to that of~\cite[Sec. 3.2]{hewing2020recursively} and it can be found in Appendix~\ref{app:SOC}.
\end{proof}
\begin{remark}
    With a slight abuse of notation, for $i=0$ the matrices $\mathbf{\Sigma}^\mathrm{w}_0,$ $\mathbf{\Phi}^\mathrm{x}_{0|0}$ and $\mathbf{\Phi}^\mathrm{u}_{0|0}$ are defined as 0, in which case~\eqref{eq:SOC_k=0_Psi} results in an untightened half-space constraint on $z_{0|0}$ and $v_{0|0}$.
\end{remark}

\subsection{Probabilistic Constraints with Terminal Controller: Terminal Set Design} \label{sec:Mode_2}
In this section we consider the infinite tail of the prediction horizon by designing a suitable terminal set such that the desired chance constraint~\eqref{eq:chance_constraints} is satisfied at all times.

In order to keep the number of decision variables finite, a fixed stabilizing linear state feedback $K$ is used in the tail of the horizon:
\begin{equation} \label{eq:Mode_2_terminal_controller}
    u_{N+i|0} = Kx_{N+i|0},\ \forall i \in \mathbb{I}_{\geq 0},
\end{equation}
where $K \in \mathbb{R}^{m\times n}$ is chosen such that $A_K := A + BK$ is Schur stable. Therefore,
\begin{align*}
    \mathbb{E}[u_{N+i|0}] &= KA_K^iz_{N|0},\ \forall i \in \mathbb{I}_{\geq 0},\\
    \mathbb{E}[x_{N+i|0}] &= A_K^iz_{N|0},\ \forall i \in \mathbb{I}_{\geq 0},
\end{align*}
and
\begin{equation*}
\begin{split}
    \mathrm{Var}[[x_{N+i|0}^\top,&\, u_{N+i|0}^\top]^\top] = \\ &\begin{bmatrix}
        I_n \\ K
    \end{bmatrix}\begin{bmatrix}
        A_K^i\mathbf{\Phi}^\mathrm{x}_{N|0} & I_n
    \end{bmatrix} \begin{bmatrix}
        \mathbf{\Sigma}^\mathrm{w}_N & \\ & \Sigma^\mathrm{x}_i
    \end{bmatrix} [*]^\top,
\end{split}
\end{equation*}
where $\Sigma^\mathrm{x}_i$ is defined by the recursion
\begin{equation*}
    \Sigma^\mathrm{x}_{i+1} = A_K\Sigma^\mathrm{x}_iA_K^\top + \Sigma^\mathrm{w},\ \Sigma^\mathrm{x}_0 = 0.
\end{equation*}
Note that since $A_K$ is Schur stable, the
limit $\lim_{i \rightarrow \infty} \Sigma^\mathrm{x}_i = \Sigma^\mathrm{x}_\infty$ exists, and it is given by the Lyapunov equation
\begin{equation*}
    \Sigma^\mathrm{x}_\infty = A_K \Sigma^\mathrm{x}_\infty A_K^\top + \Sigma^\mathrm{w}.
\end{equation*}

\par Following similar arguments to Lemma~\ref{lem:SOC_reform}, the chance constraint
\begin{equation} \label{eq:chance_constraint_tail}
    \mathrm{Pr}[(x_{N+i|0}, u_{N+i|0}) \in \mathcal{C}^j] \geq p^j,
\end{equation}
is equivalent to the second-order cone constraint
\begin{equation} \label{eq:SOC_k=0_tail_psi}
    G_{K,j} A_K^i z \leq b_j - \sqrt{\tilde{p}^j} \norm{\begin{bmatrix}
        \left( G_{K,j} A_K^i \otimes (\mathbf{\Sigma}^{\mathrm{w}}_N)^{1/2}\right) \psi\\ (\Sigma_i^{\mathrm{x}})^{1/2} G_{K,j}^\top
    \end{bmatrix}},
\end{equation}
where $G_{K,j}$ is the $j$-th row of $G_K = G + HK \in \mathbb{R}^{c \times n}$, $z := z_{N|0}$ is used for notational convenience, and $\psi := \mathrm{vec}(\mathbf{\Phi}^{\mathrm{x}\top}_{N|0})$. As~\eqref{eq:SOC_k=0_tail_psi} needs to hold for all $i \in \mathbb{I}_{\geq 0}$, the number of constraints is infinite. We address this issue by proving that imposing a finite subset of the constraints is equivalent to imposing all infinite constraints.

The following result builds on the well-established field of maximal admissible set theory~\cite{gilbert1991linear}. The novelty comes from the fact that, $(i)$ unlike existing results that construct maximal admissible sets for a deterministic state subject to deterministic constraints, we have to construct such a set for a random variable subject to probabilistic constraints; $(ii)$ it results in a maximal admissible set that is not simply a set for the state $z$, but also for the feedback matrix $\psi$; and $(iii)$ in our case the set must satisfy an infinite number of \emph{time-varying} constraints (due to the presence of $\Sigma^\mathrm{x}_i$ in~\eqref{eq:SOC_k=0_tail_psi}). Compared to existing SMPC approaches that use online-optimized feedback~\cite{oldewurtel2008tractable, prandini2012randomized, mark2022recursively, pan2023data, li2024distributionally, knaup2024recursively}, we provide stronger guarantees in terms of closed-loop chance constraint satisfaction through this terminal set design.
\par
Consider the tightened version of inequality~\eqref{eq:SOC_k=0_tail_psi}:
\begin{equation}
     G_{K,j} A_K^i z \leq b_j - \sqrt{\tilde{p}^j} \norm{\begin{bmatrix}
        \left( G_{K,j} A_K^i \otimes (\mathbf{\Sigma}^{\mathrm{w}}_N)^{1/2}\right) \psi\\ (\Sigma_\infty^{\mathrm{x}})^{1/2} G_{K,j}^\top
    \end{bmatrix}}, \label{eq:SOC_k=0_tail_psi_tightened}
\end{equation}
where $\Sigma^\mathrm{x}_i$ has been replaced by $\Sigma^\mathrm{x}_\infty \succeq \Sigma^\mathrm{x}_i$. Let
\begin{align*}
    \mathcal{D}_i = \myset{(z, \psi)}{\eqref{eq:SOC_k=0_tail_psi} \text{ holds for }i,\ \forall j \in \mathbb{I}_{[1, c]}}, \\
    \bar{\mathcal{D}}_i = \myset{(z, \psi)}{\eqref{eq:SOC_k=0_tail_psi_tightened} \text{ holds for }i,\ \forall j \in \mathbb{I}_{[1, c]}},
\end{align*}
and define the $i$-step maximal admissible set for the probabilistic constraints~\eqref{eq:chance_constraint_tail} via the recursion
\begin{equation} \label{eq:S_sequence}
    \mathcal{S}_i = \mathcal{S}_{i-1} \cap \mathcal{D}_i,\quad \mathcal{S}_0 = \mathcal{D}_0.
\end{equation}
Finally, the maximal admissible set is defined as
\begin{equation*}
    \mathcal{S}_\infty = \myset{(z, \psi)}{\eqref{eq:SOC_k=0_tail_psi} \text{ holds}\ \forall i \in \mathbb{I}_{\geq 0}, \forall j \in \mathbb{I}_{[1, c]}}.
\end{equation*}
Note that sequence~\eqref{eq:S_sequence} converges to $\mathcal{S}_\infty$ and 
$\mathcal{S}_\infty \subseteq \mathcal{S}_i \subseteq \mathcal{S}_{i-1}$. Using $\bar{\mathcal{D}}_i$, sets $\bar{\mathcal{S}}_i$ and $\bar{\mathcal{S}}_\infty$ are defined analogously.
\begin{assumption} \label{ass:L}
    Consider the decomposition $G = LC$ and $H = LD$, where $L \in \mathbb{R}^{c\times d}$, $C \in \mathbb{R}^{d\times n}$, $D \in \mathbb{R}^{d\times m}$. We assume that the set $\mathcal{L} = \myset{y \in \mathbb{R}^d}{Ly\leq b}$ is bounded, and the pair $(C_K, A_K)$ is observable, where $A_K = A + BK$ and $C_K = C + DK$ for the stabilizing terminal feedback gain $K$. Finally,
    \begin{equation} \label{eq:b_j-norm>0}
        b_j -  \sqrt{\tilde{p}^j}\norm{(\Sigma_\infty^{\mathrm{x}})^{1/2} C_K^\top L_j^\top} > 0,\ \forall j \in \mathbb{I}_{[1,c]}.
    \end{equation}
    where $L_j$ denotes the $j$-th row of $L$.
\end{assumption}
\begin{remark} \label{rem:L_decomposition}
    The decomposition in Assumption~\ref{ass:L} naturally occurs whenever the system is subject to polytopic constraints with respect to the lower dimensional variable $y = Cx + Du$. Furthermore, note that $(x,Kx) \in \cap_{j=1}^c \mathcal{C}^j \iff C_K x \in \mathcal{L}$. The boundedness of $\mathcal{L}$ is a standard technical assumption in the maximal admissible set literature~\cite{gilbert1991linear}.
\end{remark}
\begin{remark}
    If~\eqref{eq:b_j-norm>0} does not hold, then applying the linear feedback $K$ does not ensure constraint satisfaction, even when initialized at $x=0$. Note that a similar condition is required by any SMPC approach using a fixed feedback, e.g.,~\cite{hewing2018stochastic, hewing2020recursively, schluter2022stochastic, kohler2022recursively}. A convex optimization problem that produces a suitable $K$ is presented in Appendix~\ref{app:K_design}.
\end{remark}

\begin{theorem}[Finitely determined terminal set] \label{thm:terminal_set}
    Suppose Assumption~\ref{ass:L} holds. Then, the maximal admissible set $\mathcal{S}_\infty$ is finitely determined, i.e., there exists $\mu \in \mathbb{I}_{\geq 0}$ such that $\mathcal{S}_\infty = \mathcal{S}_\mu$. Furthermore, the finite determination index $\mu$ is the smallest non-negative integer that satisfies
    \begin{equation} \label{eq:terminal_set_Smu_inclusion}
        \mathcal{S}_\mu \subseteq \bigcap_{i=\mu+1}^\infty \bar{\mathcal{D}}_i.
    \end{equation}
\end{theorem}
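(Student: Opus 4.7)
The plan is to adapt the classical Gilbert--Tan finite determination argument~\cite{gilbert1991linear} to the present setting, keeping in mind two nonstandard features: the decision variable is the pair $(z,\psi)$ with second-order cone rather than affine constraints, and the original sets $\mathcal{D}_i$ are time-varying through $\Sigma^\mathrm{x}_i$. The tightened sets $\bar{\mathcal{D}}_i$ act as a time-homogeneous upper envelope: since $\Sigma^\mathrm{x}_\infty\succeq\Sigma^\mathrm{x}_i$ by monotonicity of the Lyapunov recursion, the slack in~\eqref{eq:SOC_k=0_tail_psi_tightened} is at least that in~\eqref{eq:SOC_k=0_tail_psi}, so $\bar{\mathcal{D}}_i\subseteq\mathcal{D}_i$ for every $i$. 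This inclusion yields the easy direction: any $\mu$ satisfying~\eqref{eq:terminal_set_Smu_inclusion} has $\mathcal{S}_\mu\subseteq\bigcap_{i>\mu}\mathcal{D}_i$, and together with $\mathcal{S}_\mu=\bigcap_{i=0}^\mu\mathcal{D}_i$ from~\eqref{eq:S_sequence} and the trivial inclusion $\mathcal{S}_\infty\subseteq\mathcal{S}_\mu$, this gives $\mathcal{S}_\mu=\mathcal{S}_\infty$.

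The main work is establishing the existence of a finite $\mu$ satisfying~\eqref{eq:terminal_set_Smu_inclusion}. First I would prove that $\mathcal{S}_{n-1}$ is bounded jointly in $(z,\psi)$. Discarding the nonnegative slack in~\eqref{eq:SOC_k=0_tail_psi} gives $L_j C_K A_K^i z\leq b_j$ for every $(i,j)$ via the decomposition $G_K=LC_K$ from Assumption~\ref{ass:L}. Boundedness of $\mathcal{L}$ forces $L$ to have full column rank (otherwise any nontrivial $y\in\ker L$ would make $\mathcal{L}$ unbounded), so the vectors $\{L_j^\top\}_{j=1}^c$ span $\mathbb{R}^d$; combined with observability of $(C_K,A_K)$, the constraints at $i\in\mathbb{I}_{[0,n-1]}$ pin $z$ to a bounded set. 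For $\psi$, I would introduce $F=(\mathbf{\Sigma}^\mathrm{w}_N)^{1/2}\mathbf{\Phi}^{\mathrm{x}\top}_{N|0}$ and use the Kronecker identity $(A\otimes B)\mathrm{vec}(X)=\mathrm{vec}(BXA^\top)$ to rewrite the $\psi$-dependent slack term as $\norm{F A_K^{i\top}G_{K,j}^\top}$. Since the set $\{A_K^{i\top}C_K^\top L_j^\top:i\in\mathbb{I}_{[0,n-1]},\ j\in\mathbb{I}_{[1,c]}\}$ also spans $\mathbb{R}^n$ by observability and full column rank of $L$, the bounds on $\norm{Fv}$ along a basis extracted from this set bound $\norm{F}$ in operator norm, and since $(\mathbf{\Sigma}^\mathrm{w}_N)^{1/2}\succ 0$, this in turn bounds $\norm{\mathbf{\Phi}^\mathrm{x}_{N|0}}$.

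With $(z,\psi)$ confined to a bounded set on $\mathcal{S}_{n-1}$, Schur stability of $A_K$ makes both $|G_{K,j}A_K^i z|$ and $\norm{F A_K^{i\top}G_{K,j}^\top}$ decay geometrically in $i$, uniformly over $\mathcal{S}_{n-1}$. The $i$-independent remainder on the right-hand side of~\eqref{eq:SOC_k=0_tail_psi_tightened} equals $b_j-\sqrt{\tilde p^j}\norm{(\Sigma^\mathrm{x}_\infty)^{1/2}C_K^\top L_j^\top}$, which is strictly positive by~\eqref{eq:b_j-norm>0}. Hence for every sufficiently large $i$ the inequality defining $\bar{\mathcal{D}}_i$ holds on all of $\mathcal{S}_{n-1}\supseteq\mathcal{S}_\mu$, so~\eqref{eq:terminal_set_Smu_inclusion} is met for some finite $\mu\geq n-1$; taking the smallest such $\mu$ gives the claimed finite determination index, and the sufficiency step from the first paragraph closes the argument. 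The principal obstacle I anticipate is the joint boundedness step: making the deterministic Gilbert--Tan picture carry over to SOC constraints in $(z,\psi)$ hinges on the Kronecker identity together with the algebraic structure $G_K=LC_K$ and observability of $(C_K,A_K)$ to simultaneously control both decision variables.
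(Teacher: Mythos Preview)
Your argument is correct and shares the paper's high-level skeleton: establish that some $\mathcal{S}_k$ is bounded in $(z,\psi)$, then combine Schur decay of $A_K^i$ with the strictly positive margin~\eqref{eq:b_j-norm>0} to force~\eqref{eq:terminal_set_Smu_inclusion} for large $\mu$, and close with the easy direction $\bar{\mathcal{D}}_i\subseteq\mathcal{D}_i$. The difference lies in how boundedness is obtained. The paper introduces an auxiliary \emph{relaxed} family $\hat{\mathcal{D}}_i$ (setting $\Sigma^\mathrm{x}_i=0$), proves boundedness of the corresponding output set $\hat{\mathcal{L}}$ by a permutation-matrix decomposition of the Kronecker structure, and then lifts to an augmented pair $(\mathbf{A},\mathbf{C})$ with $\mathbf{A}=\mathrm{blkdiag}(A_K,A_K\otimes I_{Nn})$, shows this pair inherits observability, and invokes Gilbert--Tan's Theorems~2.1 and~4.1 as a black box to obtain boundedness and finite determination of $\hat{\mathcal{S}}_\infty$. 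You instead bound $z$ and $\psi$ separately and directly on $\mathcal{S}_{n-1}$, using observability of the original pair $(C_K,A_K)$ together with the vectorization identity to control $\norm{F v}$ along a spanning family. Your route is more elementary and avoids the lifted-system machinery; the paper's route is more modular and is reused verbatim in the proof of Proposition~\ref{prop:terminal_set_alg}, where the same lifted pair $(\mathbf{A},\mathbf{C})$ and output set $\bar{\mathcal{L}}$ are needed to justify termination of Algorithm~\ref{alg:terminal_set}. One small point to tighten in your write-up: bounding $\norm{Fv}$ along a basis controls $\norm{F}$ in Frobenius norm (hence $\norm{\psi}$) only after inserting the change-of-basis factor $\norm{V^{-1}}$; this is routine but worth making explicit.
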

\begin{proof}
    The proof can be found in Appendix~\ref{app:terminal_set_proof}.
\end{proof}

While Theorem~\ref{thm:terminal_set} ensures the existence of index $\mu$, its computation involves checking the set inclusion~\eqref{eq:terminal_set_Smu_inclusion}, which is non-trivial, as the right hand side is an intersection of an infinite number of sets. In order to address this, we propose Algorithm~\ref{alg:terminal_set} for finding $\mu$, leveraging results on maximal admissible sets~\cite[Sec.~III--IV]{gilbert1991linear}.

\begin{algorithm}
\caption{Terminal set design} \label{alg:terminal_set}
\begin{algorithmic}[1]
    \State Increase $\nu \in \mathbb{I}_{\geq 0}$ until $\bar{\mathcal{S}}_\nu \subseteq \bar{\mathcal{D}}_{\nu+1}$. \label{algstep:nu}
    \State Increase $\mu \in \mathbb{I}_{\geq 0}$ until $\mathcal{S}_\mu \subseteq \cap_{i=\mu+1}^{\nu+\mu+1}\bar{\mathcal{D}}_i$. \label{algstep:mu}
    \State Construct $\mathcal{S}_\mu$ according to~\eqref{eq:S_sequence}.
    \Ensure $\mathcal{S}_\mu$
\end{algorithmic}
\end{algorithm}

Algorithm~\ref{alg:terminal_set} takes advantage of the insight that there exists an index $\nu > 0$ such that $\cap_{i=\mu+1}^{\nu+\mu+1}\bar{\mathcal{D}}_i = \cap_{i=\mu+1}^\infty\bar{\mathcal{D}}_i$. Step~\ref{algstep:nu} finds the smallest such integer, and Step~\ref{algstep:mu} then checks the containment of $\mathcal{S}_\mu$ in $\cap_{i=\mu+1}^{\nu+\mu+1}\bar{\mathcal{D}}_i$. We formally prove this insight in Appendix~\ref{app:algorithm_proof}.

\begin{proposition} \label{prop:terminal_set_alg}
    Suppose Assumption~\ref{ass:L} holds. Then, Algorithm~\ref{alg:terminal_set} is guaranteed to terminate and the set $\mathcal{S}_\mu$ that it returns satisfies~\eqref{eq:terminal_set_Smu_inclusion}, and hence $\mathcal{S}_\mu = \mathcal{S}_\infty$.
\end{proposition}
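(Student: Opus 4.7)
My plan is to follow the blueprint of the classical finite determination result for maximal output admissible sets~\cite{gilbert1991linear}, adapted to our SOC constraints on the pair $(z, \psi)$. Specifically, I would establish three facts in sequence: (a) the loop in Step~\ref{algstep:nu} terminates; (b) once Step~\ref{algstep:nu} returns $\nu$, the equality $\cap_{i=\mu+1}^{\nu+\mu+1}\bar{\mathcal{D}}_i = \cap_{i=\mu+1}^\infty\bar{\mathcal{D}}_i$ holds for every $\mu \in \mathbb{I}_{\geq 0}$; and (c) the loop in Step~\ref{algstep:mu} terminates at some $\mu$ satisfying~\eqref{eq:terminal_set_Smu_inclusion}, upon which Theorem~\ref{thm:terminal_set} yields $\mathcal{S}_\mu = \mathcal{S}_\infty$.

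For (a), I would first show that $\bar{\mathcal{S}}_{i_0}$ is bounded for some finite $i_0$, using boundedness of $\mathcal{L}$, the decomposition $G_{K,j} = L_j C_K$ and observability of $(C_K, A_K)$ from Assumption~\ref{ass:L}. Since $A_K$ is Schur stable, both $G_{K,j} A_K^{\nu+1} z$ and $(G_{K,j} A_K^{\nu+1} \otimes (\mathbf{\Sigma}^\mathrm{w}_N)^{1/2})\psi$ then converge to $0$ uniformly over $\bar{\mathcal{S}}_\nu \subseteq \bar{\mathcal{S}}_{i_0}$, while~\eqref{eq:b_j-norm>0} leaves a strictly positive slack on the right-hand side of $\bar{\mathcal{D}}_{\nu+1}$. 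Hence for sufficiently large $\nu$ the inclusion $\bar{\mathcal{S}}_\nu \subseteq \bar{\mathcal{D}}_{\nu+1}$ holds and Step~\ref{algstep:nu} exits.

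Claim (b) is the core structural step: because $\Sigma^\mathrm{x}_\infty$ is $i$-invariant, the tightened constraint $\bar{\mathcal{D}}_i$ depends on $i$ only through the power $A_K^i$, so the family $\{\bar{\mathcal{D}}_{i}\}_{i\geq 0}$ matches the template of a maximal output admissible set. The classical invariance argument of~\cite{gilbert1991linear} then shows that $\bar{\mathcal{S}}_\nu \subseteq \bar{\mathcal{D}}_{\nu+1}$ implies $\bar{\mathcal{S}}_\nu = \bar{\mathcal{S}}_\infty$, and repeating the same reasoning for the shifted family $\{\bar{\mathcal{D}}_{\mu+i}\}_{i\geq 1}$ — which depends on its index only through $A_K^{\mu+i}$ in the identical way — gives the desired finite-window equality. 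Claim (c) is then short: Theorem~\ref{thm:terminal_set} guarantees existence of some finite $\mu^\star$ satisfying~\eqref{eq:terminal_set_Smu_inclusion}, which via (b) coincides with the stopping condition of Step~\ref{algstep:mu}; hence the loop terminates at some $\mu \leq \mu^\star$, and applying (b) once more to this $\mu$ recovers~\eqref{eq:terminal_set_Smu_inclusion}, so Theorem~\ref{thm:terminal_set} delivers $\mathcal{S}_\mu = \mathcal{S}_\infty$.

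The main obstacle I anticipate is (b). The Gilbert--Tan argument is classically written for a linear output $y_i = C A^i x$ of a time-invariant system, but here the ``output'' at index $i$ is an SOC function of $(z, \psi)$ in which $\psi$ is coupled to $A_K^i$ through a Kronecker product $G_{K,j} A_K^i \otimes (\mathbf{\Sigma}^\mathrm{w}_N)^{1/2}$. Carrying the invariance argument through cleanly requires viewing each scalar SOC inequality (for each $j$) as a constraint on the image of $(z, \psi)$ under an augmented linear map depending only on $A_K^i$, and verifying that boundedness and the observability-based rank condition propagate to this augmented setting.
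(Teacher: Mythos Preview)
Your proposal is correct and follows essentially the same route as the paper: both arguments cast the family $\{\bar{\mathcal{D}}_i\}$ as output constraints of an augmented linear time-invariant system so that Gilbert--Tan applies directly, then use the shift $\zeta \mapsto \mathbf{A}^{\mu+1}\zeta$ to obtain the finite-window identity $\cap_{i=\mu+1}^{\nu+\mu+1}\bar{\mathcal{D}}_i = \cap_{i=\mu+1}^{\infty}\bar{\mathcal{D}}_i$, and finally invoke Theorem~\ref{thm:terminal_set} for termination of Step~\ref{algstep:mu}. The obstacle you flag is handled in the paper exactly as you suggest: the Kronecker coupling is absorbed into an augmented pair $(\mathbf{C},\mathbf{A})$ (already constructed in the proof of Theorem~\ref{thm:terminal_set}), whose observability and Schur stability let~\cite[Theorem~4.1]{gilbert1991linear} apply verbatim to the SOC output set $\bar{\mathcal{L}}$, so no new machinery beyond what you outline is needed.
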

\begin{proof}
    The proof can be found in Appendix~\ref{app:algorithm_proof}.
\end{proof}

\begin{remark} \label{rem:lossy_S_procedure}
    While Algorithm~\ref{alg:terminal_set} addresses the issue of computing the intersection of infinite number of sets in~\eqref{eq:terminal_set_Smu_inclusion}, the set inclusions in Steps~\ref{algstep:nu}--\ref{algstep:mu} are still challenging, as they require testing containment of second-order cones. We propose sufficient conditions utilizing the `lossy' S-procedure~\cite{boyd1994linear} in Appendices~\ref{app:finding_nu}--\ref{app:finding_mu}. However, if these sufficient conditions are used, then Algorithm~\ref{alg:terminal_set} may not terminate, which is a general issue when constructing maximal admissible sets for nonlinear constraints~\cite[Section~III]{gilbert1991linear}.
\end{remark}
\begin{remark} \label{rem:can_still_converge}
    Assumption~\ref{ass:L} is a sufficient but not necessary condition for the existence of a finite $\mu$, i.e., it is still possible that $\mathcal{S}_\infty$ is finitely determined, even when Assumption~\ref{ass:L} does not hold (e.g. when $\mathcal{L}$ is not bounded). In particular, if Algorithm~\ref{alg:terminal_set} converges, then $\mathcal{S}_\infty = \mathcal{S}_\mu$, i.e., the terminal set design is still successful, regardless of Assumption~\ref{ass:L}.
\end{remark}

\subsection{Proposed Optimization Problem} \label{sec:FHOCP}

Sections~\ref{sec:Mode_1} and~\ref{sec:Mode_2} presented an exact reformulation of the chance constraint~\eqref{eq:chance_constraints} in the first $N$ steps using optimized affine feedback, and in the infinite horizon tail given the fixed linear controller~\eqref{eq:Mode_2_terminal_controller}, respectively. Using these results, this section presents the proposed finite horizon optimal control problem for time step $k=0$ in the form of the following convex second-order cone program (SOCP):

\begin{subequations} \label{eq:DMC}
    \begin{align}
        \min_{\substack{\mathbf{z}_0, \mathbf{v}_0\\ \mathbf{\Phi}^\mathrm{x}_{:|0}, \mathbf{\Phi}^u_{:|0}}} \quad &J(\mathbf{z}_0, \mathbf{v}_0, \mathbf{\Phi}^\mathrm{x}_{:|0}, \mathbf{\Phi}^\mathrm{u}_{:|0})\\
        \text{s.t.} \quad\ \ 
        &z_{0|0} = x_0, \\
        &z_{i+1|0} = A z_{i|0} + B v_{i|0},\\
        &(z_{i|0}, v_{i|0}, \mathbf{\Phi}^\mathrm{x}_{i|0}, \mathbf{\Phi}^\mathrm{u}_{i|0}) \in \mathcal{C}_{i|0}, \label{eq:DMC:constraints_N}\\
        &(z_{N|0}, \mathrm{vec}(\mathbf{\Phi}^\mathrm{x}_{N|0})) \in \mathcal{S}_\mu, \label{eq:DMC:constraints_terminal}\\
        &\text{the SLP~\eqref{eq:SLP} is satisfied},\\
        &\forall i \in \mathbb{I}_{[0,N-1]},
    \end{align}
\end{subequations}
where
\begin{equation*}
    \mathcal{C}_{i|0} = \myset{(z_{i|0}, v_{i|0}, \mathbf{\Phi}^\mathrm{x}_{i|0}, \mathbf{\Phi}^\mathrm{u}_{i|0})}{\eqref{eq:SOC_k=0_Psi}\text{ is satisfied }\forall j \in \mathbb{I}_{[1,c]}},
\end{equation*}
and the cost function to be minimized is defined as
\begin{align} 
        J(\mathbf{z}_k,&\, \mathbf{v}_k, \mathbf{\Phi}^\mathrm{x}_{:|k}, \mathbf{\Phi}^\mathrm{u}_{:|k}) = \sum_{i=0}^{N-1} \ell(z_{i|k},v_{i|k}) \nonumber\\
        &+ \sum_{i=1}^{N-1}(\mathrm{tr}(Q\mathbf{\Phi}^\mathrm{x}_{i|k}\mathbf{\Sigma}^\mathrm{w}_i(\mathbf{\Phi}^{\mathrm{x}}_{i|k})^\top) + \mathrm{tr}(R\mathbf{\Phi}^\mathrm{u}_{i|k}\mathbf{\Sigma}^\mathrm{w}_i(\mathbf{\Phi}^{\mathrm{u}}_{i|k})^\top)) \nonumber\\ 
        &+ \ell_\mathrm{f}(z_{N|k}) + \mathrm{tr}(P\mathbf{\Phi}^\mathrm{x}_{N|k}\mathbf{\Sigma}^\mathrm{w}_N(\mathbf{\Phi}^{\mathrm{x}}_{N|k})^\top). \label{eq:J}
\end{align}
The terminal cost is defined as $\ell_\mathrm{f}(z) = \| z\|_P^2 +p_\mathrm{f}^\top z$ (see also~\cite{amrit2011economic}), with
\begin{equation} \label{eq:p_Lyapunov}
    p_\mathrm{f} = (I_n - A_K^\top)^{-1}(K^\top r + q),
\end{equation}
which is well-defined since $A_K$ is Schur, and $P$ is the unique positive definite solution of the Lyapunov equation
    \begin{equation} \label{eq:P_Lyapunov}
        A_K^\top P A_K + Q + K^\top R K = P.
    \end{equation}
The number of decision variables of~\eqref{eq:DMC} is $\mathcal{O}(N^2n(n+m))$, and the number of constraints is $\mathcal{O}(N(c+Nn^2) +c\mu)$.
    
\par After solving optimization problem~\eqref{eq:DMC} once at time step $k=0$, we apply
\begin{equation} \label{eq:dual_mode_law}
\begin{split}
    u_k &= v^\star_{k|0} + \sum_{j=1}^k \Phi^{\mathrm{u}\star}_{k,j|0} w_{j-1},\ \forall k \in \mathbb{I}_{[0, N-1]}, \\
    u_k &= K x_k,\ \forall k \in \mathbb{I}_{\geq N}.
\end{split}
\end{equation}
\begin{proposition}[Chance constraint satisfaction and bounded asymptotic average cost.] \label{prop:DMC}
    Suppose that Assumption~\ref{ass:L} holds and optimization problem~\eqref{eq:DMC} is feasible. Then, system~\eqref{eq:system} controlled by policy~\eqref{eq:dual_mode_law} satisfies chance constraints~\eqref{eq:chance_constraints}. Furthermore, the asymptotic average cost is upper bounded by the performance of the static state feedback $K$:
    \begin{equation} \label{eq:lavg}
        \limsup_{T \rightarrow \infty} \frac{1}{T}\sum_{k=0}^{T-1}\mathbb{E}[\ell(x_k,u_k)] \leq \mathrm{tr}(P\Sigma^\mathrm{w}).
    \end{equation}
\end{proposition}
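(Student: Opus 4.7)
The plan is to treat the two assertions separately: first the chance constraints via the constructions of Sections~\ref{sec:Mode_1} and~\ref{sec:Mode_2}, then the asymptotic cost bound via the Lyapunov equations~\eqref{eq:p_Lyapunov}--\eqref{eq:P_Lyapunov}. The key observation throughout is that under policy~\eqref{eq:dual_mode_law}, $(x_k,u_k)$ is a jointly Gaussian random variable (as an affine function of the Gaussian disturbance history), so Lemma~\ref{lem:SOC_reform} and the analysis leading to~\eqref{eq:SOC_k=0_tail_psi} apply directly with the optimized quantities $z^\star_{\cdot|0}, v^\star_{\cdot|0}, \mathbf{\Phi}^{\mathrm{x}\star}_{\cdot|0}, \mathbf{\Phi}^{\mathrm{u}\star}_{\cdot|0}$.

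For the chance constraints on $k \in \mathbb{I}_{[0,N-1]}$, I would observe that~\eqref{eq:dual_mode_law} realizes exactly the disturbance feedback parameterization~\eqref{eq:e^u_k}--\eqref{eq:e^x_k} with matrices $\mathbf{\Phi}^{\mathrm{x}\star}_{k|0}, \mathbf{\Phi}^{\mathrm{u}\star}_{k|0}$ and nominal trajectory $z^\star_{k|0}, v^\star_{k|0}$. Feasibility of~\eqref{eq:DMC} then yields the SOC constraint~\eqref{eq:SOC_k=0_Psi} for every $j$ and every $k \in \mathbb{I}_{[0,N-1]}$, which by Lemma~\ref{lem:SOC_reform} is equivalent to the chance constraint~\eqref{eq:chance_constraint_j}. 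For $k = N+i$ with $i \in \mathbb{I}_{\geq 0}$, the closed-loop trajectory satisfies $x_{N+i} = A_K^i x_N + \sum_{\ell=0}^{i-1} A_K^{i-1-\ell} w_{N+\ell}$ and $u_{N+i}=Kx_{N+i}$, so $(x_{N+i},u_{N+i})$ is Gaussian with mean and variance matching the expressions derived in Section~\ref{sec:Mode_2}. By the terminal constraint~\eqref{eq:DMC:constraints_terminal} together with Proposition~\ref{prop:terminal_set_alg} (which gives $\mathcal{S}_\mu = \mathcal{S}_\infty$), the inequality~\eqref{eq:SOC_k=0_tail_psi} holds for all $i \in \mathbb{I}_{\geq 0}$ and all $j$, which is exactly the reformulation of~\eqref{eq:chance_constraint_tail}. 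This covers the tail.

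For the average cost bound, the main tool is the standard one-step decrease identity implied by~\eqref{eq:p_Lyapunov}--\eqref{eq:P_Lyapunov}, namely $\ell_\mathrm{f}(z) - \ell_\mathrm{f}(A_K z) = \ell(z,Kz)$ for all $z$. Since the first $N$ terms in~\eqref{eq:lavg} are bounded and contribute nothing to the Cesàro limit, I only need to bound $\limsup_{k\to\infty}\mathbb{E}[\ell(x_k,Kx_k)]$ for $k\geq N$. Writing $\mathbb{E}[\ell(x_k,Kx_k)] = \mathrm{tr}\bigl((Q+K^\top R K)\mathbb{E}[x_k x_k^\top]\bigr) + (q+K^\top r)^\top \mathbb{E}[x_k]$ and using $\mathbb{E}[x_{N+i}] = A_K^i z^\star_{N|0} \to 0$ and $\mathbb{E}[x_{N+i}x_{N+i}^\top] \to \Sigma^{\mathrm{x}}_\infty$ (both by Schur stability of $A_K$), the limit equals $\mathrm{tr}\bigl((Q+K^\top R K)\Sigma^{\mathrm{x}}_\infty\bigr)$. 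Substituting $Q+K^\top R K = P - A_K^\top P A_K$ from~\eqref{eq:P_Lyapunov} and then $\Sigma^{\mathrm{x}}_\infty - A_K\Sigma^{\mathrm{x}}_\infty A_K^\top = \Sigma^{\mathrm{w}}$ via the trace-cyclic property yields $\mathrm{tr}(P\Sigma^{\mathrm{w}})$, establishing~\eqref{eq:lavg}.

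The only subtle step is the second one: one must check that the variance of $x_N$ transported through $A_K^i$ decays (so it does not inflate the steady-state variance beyond $\Sigma^{\mathrm{x}}_\infty$) and that the linear-in-state cost term indeed vanishes in the Cesàro sense rather than accumulating. Both follow from $\rho(A_K)<1$ applied to $A_K^i z^\star_{N|0}$ and $A_K^i \mathrm{Var}[x_N] (A_K^i)^\top$, but deserve an explicit line in the write-up. Everything else is bookkeeping, so I do not expect any serious obstacle beyond the careful decomposition of the closed-loop second moment into a transient part vanishing geometrically and the stationary part $\Sigma^{\mathrm{x}}_\infty$.
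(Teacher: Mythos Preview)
Your proposal is correct and follows essentially the same approach as the paper's proof: Lemma~\ref{lem:SOC_reform} for $k\in\mathbb{I}_{[0,N-1]}$, the terminal set $\mathcal{S}_\mu=\mathcal{S}_\infty$ for the tail, and the observation that the first $N$ steps vanish in the Ces\`aro limit so only the steady-state variance under $K$ matters. The paper simply cites~\cite[Chapter~8]{kouvaritakis2016model} for Part~II, whereas you spell out the limit computation via the two Lyapunov equations; this is the standard argument behind that citation, so no genuine divergence.
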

\begin{proof}
    \textit{Part I: Chance constraint satisfaction.} Constraint~\eqref{eq:DMC:constraints_N} corresponds to~\eqref{eq:SOC_k=0_Psi}, which is equivalent to the original chance constraint~\eqref{eq:chance_constraints} for $k \in \mathbb{I}_{[0,N-1]}$. Furthermore, the terminal constraint~\eqref{eq:DMC:constraints_terminal} ensures that the terminal controller satisfies~\eqref{eq:chance_constraints} $k \in \mathbb{I}_{\geq N}$, since it uses the terminal set $\mathcal{S}_\infty$ satisfying chance constraints~\eqref{eq:chance_constraints} by definition, due to the equivalence of~\eqref{eq:chance_constraint_tail} and~\eqref{eq:SOC_k=0_tail_psi}.
\par \textit{Part II: Bounded asymptotic average cost}. 
The expected cost incurred in the first $N$ steps is a finite number, and thus it does not contribute to the asymptotic average cost in the infinite limit. Furthermore, since $P$ satisfies the Lyapunov equation~\eqref{eq:P_Lyapunov}, the asymptotic average cost associated with the infinite tail can be shown to be bounded by $\mathrm{tr}(P\Sigma^\mathrm{w})$ (cf.~\cite[Chapter~8]{kouvaritakis2016model}).
\end{proof}

Proposition~\eqref{prop:DMC} ensures that the affine policy~\eqref{eq:dual_mode_law} resulting from Problem~\eqref{eq:DMC} provides the desired guarantees of satisfaction of probabilistic constraints and a suitable bound on the expected cost. However, this result only applies to the policy optimized at time $k=0$, when only the first $N$ control inputs are optimized. Furthermore, the control policy is not updated based on the information available up to the current time step (i.e., based on $w_{0:k-1}$).

One existing method to re-optimize in closed loop is the indirect feedback paradigm~\cite{hewing2020recursively}.
In order to extend this method to use online-optimized feedback, a natural idea is to combine it with~\eqref{eq:DMC}. However, the crucial assumption that the error system is independent from the optimization problem of the indirect feedback SMPC breaks down in the optimized feedback setting, which leads to the loss of closed-loop guarantees. We illustrate this in Section~\ref{sec:numerical} on a numerical example.

In order to overcome this challenge, we propose a novel SMPC framework in the following sections that relies on the idea of reconditioning.

\section{General Reconditioning Framework for SMPC} \label{sec:reconditioning}

In the following, we present a general receding horizon reconditioning SMPC framework, which is then used in Section~\ref{sec:RHC} to propose the receding horizon implementation of the finite horizon optimization problem derived in Section~\ref{sec:DMC}. This method is inspired by~\cite{wang2021recursive}, where a shrinking horizon reconditioning SMPC is applied for the case of mission-wide (joint-in-time) state constraints. The following method extends this idea to address general causal policies depending on past disturbances and a general receding horizon setting.
\par Consider system~\eqref{eq:system}. The goal is to ensure satisfaction of~\eqref{eq:chance_constraints} for optimized feedback, which we achieve using the idea of reconditioning. To this end, we propose the following optimization problem to be solved at each time step $k \in \mathbb{I}_{\geq 0}$:
\begin{subequations} \label{eq:Rec-SMPC}
    \begin{align}
        \min_{\boldsymbol{\pi_k}} \quad &\mathcal{J}(x_k, \boldsymbol{\pi_k})\\
        \text{s.t.} \quad\, &u_t = \pi_{t-k|k}(w_{k:t-1}),\\
        &\mathrm{Pr}[(x_t, u_t) \in \mathcal{C}^j | w_{0:k-1}, \{\boldsymbol{\pi}^\mathrm{p}_k,\boldsymbol{\pi}_k\}] \geq p^j_{t-k|k},\label{eq:Rec-SMPC:constraints} \\
        &\eqref{eq:system},\ \forall j \in \mathbb{I}_{[1,c]},\ \forall t \in \mathbb{I}_{\geq k},
    \end{align}
\end{subequations}
where $\mathcal{J}$ is some suitable cost function. The optimization is carried out over a sequence of causal policies $\boldsymbol{\pi}_k = \{\pi_{t-k|k}\}_{t=k}^\infty$, where each policy $\pi_{t-k|k}$ maps the disturbances $w_{k:t-1}$ to a control input $u_t$ for $t \in \mathbb{I}_{>k}$, and $\pi_{0|k} \in \mathbb{R}^m$.
The notation $\{\boldsymbol{\pi}^\mathrm{p}_k,\boldsymbol{\pi}_k\}$ denotes the concatenation of the past optimal policies $\boldsymbol{\pi}^\mathrm{p}_k = \{\pi_{0|l}^\star\}_{l=0}^{k-1}$ and the currently optimized future policies $\boldsymbol{\pi}_k$. Furthermore, note that $(x_t, u_t)$ are random variables that depend on every disturbance $w_{0:t-1}$ and also on the sequence of policies. After conditioning on $w_{0:k-1}$ and fixing the policy sequence to $\{\boldsymbol{\pi}^\mathrm{p}_k,\boldsymbol{\pi}_k\}$, $(x_t, u_t)$ are still random variables, however, the stochasticity now only comes from the disturbances that arise between the time step $k$ when the optimization problem is solved and the predicted time step $t$, i.e., $w_{k:t-1}$. Therefore, $\mathrm{Pr}[(x_t, u_t) \in \mathcal{C}^j | w_{0:k-1}, \{\boldsymbol{\pi}^\mathrm{p}_k,\boldsymbol{\pi}_k\}]$ represents the conditional probability that $(x_t, u_t) \in \mathcal{C}^j$, conditioned on the realized values of the past disturbances $w_{0:k-1}$ and fixing the policy sequence to $\{\boldsymbol{\pi}^\mathrm{p}_k,\boldsymbol{\pi}_k\}$.
This corresponds to the predicted future state and input trajectories based on the current measured state, which is uniquely invoked by the past policies $\boldsymbol{\pi}^\mathrm{p}_k$ and the past disturbances.

The required probability levels are updated according to
\begin{equation} \label{eq:initial_prob}
    p^j_{t|0} = p^j
\end{equation}
$\forall j \in \mathbb{I}_{[1,c]},\ \forall t \in \mathbb{I}_{\geq 0}$ and
\begin{equation} \label{eq:pjik}
    p^j_{t-k|k} = \mathrm{Pr}[(x_t, u_t) \in \mathcal{C}^j | w_{0:k-1}, \{\boldsymbol{\pi}^\mathrm{p}_{k-1},\boldsymbol{\pi}_{k-1}^\star\}], \ \forall k \in \mathbb{I}_{> 0},
\end{equation}
$\forall j \in \mathbb{I}_{[1,c]},\ \forall t \in \mathbb{I}_{\geq k}$.
This implies that the new policy $\boldsymbol{\pi}_k$ must satisfy the constraints with a probability that is not lower than that of the previous policy $\boldsymbol{\pi}^\star_{k-1}$, conditioned on all realized disturbances $w_{0:k-1}$.
Note that $\boldsymbol{\pi}_{k-1}^\star$ was computed when $w_{k-1}$ has not been realized yet, so one can think of~\eqref{eq:pjik} as the constraint satisfaction probability achieved by the previous policy sequence in hindsight, knowing the value of the disturbance $w_{k-1}$.

After solving the optimization problem, the first element of the optimal input sequence is applied to the system:
\begin{equation} \label{eq:Rec-SMPC:control_law}
    u_k = \pi^\star_{0|k}.
\end{equation}
\begin{proposition}[Recursive feasibility and closed-loop chance constraint satisfaction of reconditioning SMPC] \label{prop:Rec-SMPC}
    Assume that~\eqref{eq:Rec-SMPC} is feasible at time step $k=0$. Then, problem~\eqref{eq:Rec-SMPC} remains feasible $\forall k \in \mathbb{I}_{\geq 0}$ for the closed-loop system resulting from applying feedback~\eqref{eq:Rec-SMPC:control_law} to system~\eqref{eq:system}, and the chance constraints~\eqref{eq:chance_constraints} are satisfied.
\end{proposition}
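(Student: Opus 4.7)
The plan is to prove the two claims in sequence. First, recursive feasibility is established by exhibiting an explicit feasible candidate at time $k+1$ obtained as the shift of the optimal solution at time $k$. Second, closed-loop satisfaction of~\eqref{eq:chance_constraints} is obtained by introducing a conditional probability process indexed by the solve time, noting that the reconditioning constraint turns it into a submartingale in $k$, and concluding via the tower property.

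For \emph{recursive feasibility}, given the optimal $\boldsymbol{\pi}_k^\star = \{\pi^\star_{t-k|k}\}_{t=k}^\infty$ at time $k$, I would propose the causal candidate at time $k+1$ defined by
\begin{equation*}
\pi^\mathrm{c}_{t-(k+1)|k+1}(w_{k+1:t-1}) := \pi^\star_{t-k|k}(w_k, w_{k+1:t-1}), \quad t \geq k+1,
\end{equation*}
where $w_k$ denotes the realized disturbance. By construction, under $\{\boldsymbol{\pi}^\mathrm{p}_{k+1}, \boldsymbol{\pi}_{k+1}^\mathrm{c}\}$ the future input sequence $(u_t)_{t \geq k+1}$ coincides almost surely with the one induced by $\{\boldsymbol{\pi}^\mathrm{p}_k, \boldsymbol{\pi}_k^\star\}$ once $w_k$ is fixed, so
\begin{equation*}
\mathrm{Pr}[(x_t,u_t)\in\mathcal{C}^j \,|\, w_{0:k}, \{\boldsymbol{\pi}^\mathrm{p}_{k+1},\boldsymbol{\pi}_{k+1}^\mathrm{c}\}] = \mathrm{Pr}[(x_t,u_t)\in\mathcal{C}^j \,|\, w_{0:k}, \{\boldsymbol{\pi}^\mathrm{p}_k,\boldsymbol{\pi}_k^\star\}],
\end{equation*}
and the right-hand side is exactly $p^j_{t-(k+1)|k+1}$ by~\eqref{eq:pjik}. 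The candidate therefore satisfies~\eqref{eq:Rec-SMPC:constraints} with equality, proving recursive feasibility.

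For \emph{closed-loop chance constraint satisfaction}, I will define
\begin{equation*}
\rho_t^{j,k} := \mathrm{Pr}[(x_t,u_t)\in\mathcal{C}^j \,|\, w_{0:k-1}, \{\boldsymbol{\pi}^\mathrm{p}_k,\boldsymbol{\pi}_k^\star\}], \quad 0 \leq k \leq t,
\end{equation*}
which is a deterministic function of $w_{0:k-1}$ only, since the policy sequence is measurable with respect to $w_{0:k-1}$ and the probability integrates out $w_{k:t-1}$. Feasibility combined with~\eqref{eq:pjik} yields $\rho_t^{j,k} \geq \rho_t^{j,k-1}$ almost surely for $k \geq 1$, while~\eqref{eq:initial_prob} gives $\rho_t^{j,0} \geq p^j$. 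Because $\rho_t^{j,k-1}$ is $w_{0:k-2}$-measurable, the tower property gives $\mathbb{E}[\rho_t^{j,k}] \geq \mathbb{E}[\rho_t^{j,k-1}]$, so $(\mathbb{E}[\rho_t^{j,k}])_{k=0}^t$ is non-decreasing. Specializing to $t = k$, the pair $(x_k, u_k) = (x_k, \pi^\star_{0|k})$ is a deterministic function of $w_{0:k-1}$, hence $\rho_k^{j,k}$ is the indicator $\mathbb{I}_{(x_k,u_k)\in\mathcal{C}^j}$. Taking expectations chains the inequalities into $\mathrm{Pr}[(x_k,u_k)\in\mathcal{C}^j] = \mathbb{E}[\rho_k^{j,k}] \geq \mathbb{E}[\rho_k^{j,0}] = \rho_k^{j,0} \geq p^j$, which is the desired bound.

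The main obstacle is the measurability bookkeeping: verifying that $\rho_t^{j,k-1}$ is genuinely $w_{0:k-2}$-measurable (so that the tower step correctly converts the reconditioned inequality into a statement about unconditional expectations), and confirming that the shifted candidate $\boldsymbol{\pi}_{k+1}^\mathrm{c}$ remains a valid causal policy under the allowed policy class. Once these two points are pinned down, the proof reduces to iterated application of the definition~\eqref{eq:pjik} and the feasibility constraint~\eqref{eq:Rec-SMPC:constraints}, and does not require any further distributional assumption.
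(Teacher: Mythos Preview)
Your recursive feasibility argument is correct and coincides with the paper's: both shift the previous optimal policy by one step, freeze the realized disturbance in its first argument, and observe that the reconditioning constraint~\eqref{eq:Rec-SMPC:constraints} is then met with equality.

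For chance constraint satisfaction, your overall strategy (iterated tower property) also matches the paper's, but one step is misstated. You write ``$\rho_t^{j,k} \geq \rho_t^{j,k-1}$ almost surely''; this is false in general. Feasibility together with~\eqref{eq:pjik} yields the almost-sure inequality $\rho_t^{j,k} \geq p^j_{t-k|k}$, where
\[
p^j_{t-k|k} = \mathrm{Pr}\big[(x_t,u_t)\in\mathcal{C}^j \,\big|\, w_{0:k-1}, \{\boldsymbol{\pi}^\mathrm{p}_{k-1},\boldsymbol{\pi}_{k-1}^\star\}\big]
\]
conditions on one \emph{more} disturbance than $\rho_t^{j,k-1}$ does. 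The two quantities differ pathwise: an unfavourable realization of $w_{k-1}$ can drive $p^j_{t-k|k}$ strictly below $\rho_t^{j,k-1}$, so no almost-sure comparison holds. What is true is that, since $\{\boldsymbol{\pi}^\mathrm{p}_{k-1},\boldsymbol{\pi}_{k-1}^\star\}$ is $w_{0:k-2}$-measurable, the tower property gives $\mathbb{E}[p^j_{t-k|k} \mid w_{0:k-2}] = \rho_t^{j,k-1}$, hence $\mathbb{E}[\rho_t^{j,k} \mid w_{0:k-2}] \geq \rho_t^{j,k-1}$. This is the submartingale inequality you actually need, and it is precisely where the tower property enters (had your almost-sure claim been correct, no tower step would be required). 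With this correction, your chain $\mathbb{E}[\rho_k^{j,k}] \geq \cdots \geq \rho_k^{j,0} \geq p^j$ and the identification of $\rho_k^{j,k}$ with the indicator go through exactly as written. The paper carries out the same computation in integral form via the law of total probability.
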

\begin{proof}
    The proof builds on the ideas of~\cite[Prop.~1]{wang2021recursive} and it can be found in Appendix~\ref{app:reconditioning_proof}.
\end{proof}

\begin{remark} \label{rem:vanishing_constraints}
    Note that~\eqref{eq:pjik} yields $p^j_{0|k} = 0$ when the optimal input from the previous solution would result in input constraint violation. In this case, condition~\eqref{eq:Rec-SMPC:constraints} allows for unbounded inputs without compromising the probabilistic constraints~\eqref{eq:chance_constraints}.
\end{remark}

\begin{remark} \label{rem:reconditioning_nonlinear}
    Although we consider linear time-invariant systems~\eqref{eq:system} with half-space chance constraints~\eqref{eq:chance_constraints}, the results of this section directly apply to general nonlinear systems affected by i.i.d. disturbances subject to chance constraints, without any modifications.
\end{remark}

\section{Proposed Receding Horizon Reconditioning SMPC Algorithm} \label{sec:RHC}

In this section, we build on the results of Sections~\ref{sec:DMC} and~\ref{sec:reconditioning} to formulate an SMPC method that comes with the desired closed-loop guarantees in terms of constraint satisfaction and performance. As standard in MPC, we re-optimize the policy in a receding horizon fashion to improve performance by leveraging the most recent state measurement and also reduce the effects of using a finite horizon.
The proposed SMPC scheme is presented in Section~\ref{sec:RHC_SMPC} and the closed-loop analysis is carried out in Section~\ref{sec:RHC_analysis}.

\subsection{Proposed SMPC Scheme} \label{sec:RHC_SMPC}
In the following, a tractable receding horizon SMPC scheme is proposed that builds on the idea of reconditioning to re-optimize the feedback policies while ensuring recursive feasibility and closed-loop chance constraint satisfaction. We combine the reconditioning idea from Section~\ref{sec:reconditioning} with the tractable SOCP formulation for optimized SLP-based feedback from Section~\ref{sec:DMC}. In particular, we consider~\eqref{eq:DMC} at $k=0$, and in the following it is explained how reconditioning can be used for $k \in \mathbb{I}_{>0}$.
\par Using the notation from Section~\ref{sec:reconditioning}, let $(x_{i|k},u_{i|k})$ denote $(x_{k+i}, u_{k+i})| w_{0:k-1}, \{\boldsymbol{\pi}^\mathrm{p}_k,\boldsymbol{\pi}_k\}$, i.e., the prediction of the state and input at time step $k+i$, conditioned on the past disturbances $w_{0:k-1}$, keeping the future disturbances $w_{k:k+i-1}$ as random variables, and fixing the policy sequence to $\{\boldsymbol{\pi}^\mathrm{p}_k,\boldsymbol{\pi}_k\}$. For the first $N$ steps, the SLP-based feedback parameterization results in
\begin{equation} \label{eq:RHC:SLS-based_param}
    \begin{split}
        u_{i|k} &= v_{i|k} + \sum_{j=1}^i \Phi^\mathrm{u}_{i,j|k} w_{k+j-1},\ \forall i \in \mathbb{I}_{[0,N-1]}, \\
        x_{i|k} &= z_{i|k} + \sum_{j=1}^i \Phi^\mathrm{x}_{i,j|k} w_{k+j-1},\ \forall i \in \mathbb{I}_{[0,N]}.
    \end{split}
\end{equation}
When conditioned on $w_{0:k-1}$, the system response matrices in~\eqref{eq:RHC:SLS-based_param} are deterministic, and thus $p((x_t, u_t) \in \mathcal{C}^j | w_{0:k-1}, \{\boldsymbol{\pi}^\mathrm{p}_k,\boldsymbol{\pi}_k\})$ and $p((x_t, u_t) \in \mathcal{C}^j | w_{0:k-1}, \{\boldsymbol{\pi}^\mathrm{p}_{k-1},\boldsymbol{\pi}_{k-1}^\star\})$ in~\eqref{eq:Rec-SMPC:constraints} and~\eqref{eq:pjik} are Gaussian. Thus, we can construct a deterministic reformulation of the probabilistic constraint~\eqref{eq:Rec-SMPC:constraints}, similarly to Section~\ref{sec:DMC}. We will use the following notation for the shifted policy from time step $k-1$ reconditioned on $w_{k-1}$:
\begin{equation} \label{eq:hat_zvPsi_def}
    \begin{split}
        \hat{z}_{i|k} &:= z^\star_{i+1|k-1} + \Phi^{\mathrm{x}\star}_{i+1,1|k-1} w_{k-1},\\
    \hat{v}_{i|k} &:= v^\star_{i+1|k-1} + \Phi^{\mathrm{u}\star}_{i+1,1|k-1} w_{k-1},\\
     \hat{\mathbf{\Phi}}^\mathrm{x}_{i|k} &:= \begin{bmatrix} \Phi^{\mathrm{x}\star}_{i+1,2|k-1} & ... & \Phi^{\mathrm{x}\star}_{i+1,i+1|k-1}\end{bmatrix},\\
    \hat{\mathbf{\Phi}}^\mathrm{u}_{i|k} &:= \begin{bmatrix} \Phi^{\mathrm{u}\star}_{i+1,2|k-1} & ... & \Phi^{\mathrm{u}\star}_{i+1,i+1|k-1}\end{bmatrix},
    \end{split}
\end{equation}
$\forall i \in \mathbb{I}_{[0,N-1]}$, where
\begin{equation}
    \begin{split}
         v^\star_{N|k-1} &:= Kz^\star_{N|k-1}, \\
    \Phi^{\mathrm{u}\star}_{N,j|k-1} &:= K\Phi^{\mathrm{x}\star}_{N,j|k-1},\ \forall j \in \mathbb{I}_{[1,N]};
    \end{split}
\end{equation}
furthermore,
\begin{equation} \label{eq:hat_zNPsiN_def}
    \begin{split}
        \hat{z}_{N|k} &:= A_Kz^\star_{N|k-1} + A_K\Phi^{\mathrm{x}\star}_{N,1|k-1}w_{k-1}, \\
    \hat{\mathbf{\Phi}}^\mathrm{x}_{N|k} &:= \begin{bmatrix} A_K\hat{\mathbf{\Phi}}^\mathrm{x}_{N-1|k} & I_n\end{bmatrix}.
    \end{split}
\end{equation}
\par Using this notation, we now show how to define the constraint sets $\mathcal{C}^j_{i|k}$ and $\mathcal{C}_{N|k}$ of the conceptual SMPC problem~\eqref{eq:conceptual-SMPC} in Section~\ref{sec:concept} by leveraging the results on reconditioning from Section~\ref{sec:reconditioning}. In particular, we show how to apply the reconditioning constraints~\eqref{eq:Rec-SMPC:constraints} to the SLP-based parameterization~\eqref{eq:RHC:SLS-based_param}. First, consider $i \in \mathbb{I}_{[0:N-1]}$. It is possible that the conditional distribution in~\eqref{eq:pjik} is degenerate (has zero variance), which occurs exactly when 
\begin{equation} \label{eq:pjik_degenerate}
    G_j\hat{\mathbf{\Phi}}^\mathrm{x}_{i|k} + H_j\hat{\mathbf{\Phi}}^\mathrm{u}_{i|k} = 0.
\end{equation}
In this case, $p^j_{i|k} = 1$ whenever
\begin{equation} \label{eq:previous_satisfied}
    G_j\hat{z}_{i|k} +H_j\hat{v}_{i|k} \leq b_j,
\end{equation}
and $p^j_{i|k} = 0$ otherwise. In particular, in the special case of $i=0$, the variance of $p(x^\star_{1|k-1}, u^\star_{1|k-1}|w_{0:k-1})$ is always 0 (cf. Remark~\ref{rem:vanishing_constraints}). By using the convention that matrices $\hat{\mathbf{\Phi}}^\mathrm{x}_{0|k}$ and $\hat{\mathbf{\Phi}}^\mathrm{u}_{0|k}$ always satisfy~\eqref{eq:pjik_degenerate}, this case is also covered by the general rule discussed above. To handle the degenerate cases, we define the following constraint sets:
\begin{align} 
    \mathcal{C}^{j,(1)}_{i|k} &= \myset{(z_{i|k}, v_{i|k}, \mathbf{\Phi}^\mathrm{x}_{i|k}, \mathbf{\Phi}^\mathrm{u}_{i|k})}{\begin{aligned}
            G_j z_{i|k} + H_j v_{i|k} \leq b_j \\
            G_j\mathbf{\Phi}^\mathrm{x}_{i|k} + H_j\mathbf{\Phi}^\mathrm{u}_{i|k} = 0
         \end{aligned}}, \label{eq:Cijk1} \\
    \mathcal{C}^{j,(2)}_{i|k} &= \mathbb{R}^n \times \mathbb{R}^m \times \mathbb{R}^{n \times in} \times \mathbb{R}^{m \times in}. \nonumber
\end{align}
As for the non-degenerate case (i.e., when~\eqref{eq:pjik_degenerate} does not hold), following similar derivations as in Section~\ref{sec:DMC}, we arrive at inequality
\begin{equation} \label{eq:RHC_SOC_k>0}
\begin{split}
    G_j z_{i|k} &+ H_j v_{i|k} \leq b_j \\ &- \alpha^j_{i|k} \norm{(\mathbf{\Sigma}^{\mathrm{w}}_i)^{1/2} \begin{bmatrix} (\mathbf{\Phi}^{x}_{i|k})^\top & (\mathbf{\Phi}^{u}_{i|k})^\top \end{bmatrix} \begin{bmatrix} G_j^\top \\ H_j^\top \end{bmatrix}},
\end{split}
\end{equation}
where
\begin{equation} \label{eq:alpha_ijk}
    \alpha^j_{i|k} := \frac{b_j-G_j \hat{z}_{i|k}-H_j \hat{v}_{i|k}}{\norm{(\mathbf{\Sigma}^{\mathrm{w}}_i)^{1/2} \begin{bmatrix} (\hat{\mathbf{\Phi}}^{\mathrm{x}}_{i|k})^\top & (\hat{\mathbf{\Phi}}^{\mathrm{u}}_{i|k})^\top\end{bmatrix} \begin{bmatrix} G_j^\top \\ H_j^\top \end{bmatrix}}}.
\end{equation}
Compared to~\eqref{eq:SOC_k=0_Psi} derived for $k=0$, $\alpha^j_{i|k}$ takes the place of $\sqrt{\tilde{p}^j}$. However, unlike $\sqrt{\tilde{p}^j}$, it is possible that $\alpha^j_{i|k}$ is negative, in which case~\eqref{eq:RHC_SOC_k>0} is non-convex. Thus, whenever $\alpha^j_{i|k} < 0$, we enforce the following convex sufficient condition instead:
\begin{equation} \label{eq:suff_convex_neg_alpha}
\begin{split}
     G_j z_{i|k} + H_j v_{i|k} &\leq G_j \hat{z}_{i|k} + H_j \hat{v}_{i|k}, \\
     G_j\mathbf{\Phi}^\mathrm{x}_{i|k} + H_j\mathbf{\Phi}^\mathrm{u}_{i|k} &= G_j\hat{\mathbf{\Phi}}^\mathrm{x}_{i|k} + H_j\hat{\mathbf{\Phi}}^\mathrm{u}_{i|k},
\end{split}
\end{equation}
which simplifies~\eqref{eq:RHC_SOC_k>0} to an inequality on the means by enforcing that the variances are matched. Using this, we define the following constraint sets:
\begin{align*}
    \mathcal{C}^{j,(3)}_{i|k} &= \myset{(z_{i|k}, v_{i|k}, \mathbf{\Phi}^\mathrm{x}_{i|k}, \mathbf{\Phi}^\mathrm{u}_{i|k})}{\eqref{eq:RHC_SOC_k>0} \text{ is satisfied}}, \\
    \mathcal{C}^{j,(4)}_{i|k} &= \myset{(z_{i|k}, v_{i|k}, \mathbf{\Phi}^\mathrm{x}_{i|k}, \mathbf{\Phi}^\mathrm{u}_{i|k})}{\eqref{eq:suff_convex_neg_alpha} \text{ is satisfied}}.
\end{align*}

\par Following a similar derivation for the tail of the infinite horizon, we obtain the constraints
\begin{align} \label{eq:SOC_k>0_tail}
        G_{K,j} A_K^i &z_{N|k} \leq b_j \\ &- \alpha^j_{N+i|k} \norm{\begin{bmatrix}
        (\mathbf{\Sigma}^{\mathrm{w}}_N)^{1/2} (\mathbf{\Phi}^{\mathrm{x}}_{N|k})^\top (A_K^{i})^\top G_{K,j}^\top \\ (\Sigma^{\mathrm{x}}_i)^{1/2} G_{K,j}^\top \nonumber
    \end{bmatrix}}
\end{align}
$\forall i \in \mathbb{I}_{\geq 0}$, where
\begin{equation*}
    \alpha^j_{N+i|k} = \frac{b_j-G_{K,j}A_K^i\hat{z}_{N|k}}{\norm{\begin{bmatrix}
        (\mathbf{\Sigma}^{\mathrm{w}}_N)^{1/2} (\hat{\mathbf{\Phi}}^{\mathrm{x}}_{N|k})^\top (A_K^{i})^\top G_{K,j}^\top \\ (\Sigma^{\mathrm{x}}_i)^{1/2} G_{K,j}^\top
    \end{bmatrix}}}.
\end{equation*}
Similarly to~\eqref{eq:SOC_k=0_tail_psi} in Section~\ref{sec:Mode_2}, this is a collection of infinite constraints. However, for each $i$ it contains a different factor $\alpha^j_{N+i|k}$ multiplying the norm term (instead of the fixed $\sqrt{\tilde{p}^j}$), meaning that we cannot directly invoke Theorem~\ref{thm:terminal_set}. Thus, we propose to use the following terminal set for $k > 0$:
\begin{equation} \label{eq:C_Nk_def}
    \mathcal{C}_{N|k} = \myset{(z_{N|k}, \mathbf{\Phi}^\mathrm{x}_{N|k})}{z_{N|k} = \hat{z}_{N|k},\ \mathbf{\Phi}^\mathrm{x}_{N|k} = \hat{\mathbf{\Phi}}^\mathrm{x}_{N|k}},
\end{equation}
trivially satisfying~\eqref{eq:SOC_k>0_tail} $\forall j \in \mathbb{I}_{[1,c]},\ \forall i \in \mathbb{I}_{\geq 0}$ with equality.

\begin{remark} \label{rem:terminal_constraint}
    The terminal set~\eqref{eq:C_Nk_def} is more conservative than enforcing~\eqref{eq:SOC_k>0_tail} $\forall i \in \mathbb{I}_{\geq 0}$. The reason we employ this terminal set is that even if Theorem~\ref{thm:terminal_set} was extended for the case when the varying $\alpha^j_{N+i|k}$ factors take the place of $\sqrt{\tilde{p}^j}$, the obtained finite determination index $\mu$ would vary based on the $\alpha^j_{N+i|k}$ values.
    A simple approximation is to enforce~\eqref{eq:SOC_k>0_tail} $\forall i \in \mathbb{I}_{[0,\hat{\mu}]}$ instead, for a large enough number $\hat{\mu} > 0$. We compare this approximation with the terminal set~\eqref{eq:C_Nk_def} in simulations in Section~\ref{sec:numerical}.
\end{remark}
\par In summary, applying the general reconditioning SMPC framework presented in Section~\ref{sec:reconditioning} to the tractable SLP-based formulation results in the following SMPC problem (to be solved at all time steps $k>0$):
\begin{subequations} \label{eq:RHC}
    \begin{align}
        \min_{\substack{\mathbf{z}_k, \mathbf{v}_k\\ \mathbf{\Phi}^\mathrm{x}_{:|k}, \mathbf{\Phi}^\mathrm{u}_{:|k}}} \quad &J(\mathbf{z}_k, \mathbf{v}_k, \mathbf{\Phi}^\mathrm{x}_{:|k}, \mathbf{\Phi}^\mathrm{u}_{:|k})\\
        \text{s.t.} \quad\ \ 
        &z_{0|k} = x_k, \label{eq:RHC:init}\\
        &z_{i+1|k} = A z_{i|k} + B v_{i|k},\\
        &(z_{i|k}, v_{i|k}, \mathbf{\Phi}^\mathrm{x}_{i|k}, \mathbf{\Phi}^\mathrm{u}_{i|k}) \in \mathcal{C}^j_{i|k}, \label{eq:RHC:constraints_N}\\
        &(z_{N|k}, \mathbf{\Phi}^\mathrm{x}_{N|k}) \in \mathcal{C}_{N|k}, \label{eq:RHC:constraints_terminal}\\
        &\text{the SLP~\eqref{eq:SLP} is satisfied}, \\
        &\forall{j} \in \mathbb{I}_{[1,c]},\ \forall i \in \mathbb{I}_{[0,N-1]}.
    \end{align}
\end{subequations}
The cost function $J(.)$ is defined in~\eqref{eq:J}, and
\begin{equation} \label{eq:Cijk}
    \mathcal{C}^j_{i|k} := \begin{cases}
        \mathcal{C}^{j,(1)}_{i|k} \ \text{ if~\eqref{eq:pjik_degenerate} and~\eqref{eq:previous_satisfied}} \\ \mathcal{C}^{j,(2)}_{i|k} \ \text{ if~\eqref{eq:pjik_degenerate} but not~\eqref{eq:previous_satisfied}} \\
        \mathcal{C}^{j,(3)}_{i|k} \ \text{ if not~\eqref{eq:pjik_degenerate} and $\alpha^j_{i|k} \geq 0$} \\
        \mathcal{C}^{j,(4)}_{i|k} \ \text{ if not~\eqref{eq:pjik_degenerate} and $\alpha^j_{i|k} < 0$}
    \end{cases}.
\end{equation}

Similarly to existing MPC methods that optimize over the feedback matrices online~\cite{goulart2006optimization, sieber2021system, mark2022recursively}, the number of decision variables of~\eqref{eq:RHC} is $\mathcal{O}(N^2n(n+m))$, and the number of constraints is $\mathcal{O}(N(c+Nn^2))$.

\begin{remark} \label{rem:Cj1-4}
    While constraints $\mathcal{C}^{j,(1)}_{i|k}$, $\mathcal{C}^{j,(2)}_{i|k}$, and $\mathcal{C}^{j,(3)}_{i|k}$ are exact reformulations of the original chance constraint~\eqref{eq:Rec-SMPC:constraints} (meaning that they are non-conservative), $\mathcal{C}^{j,(4)}_{i|k}$ is only a sufficient condition, introducing some conservatism. However, $\mathcal{C}^{j,(4)}_{i|k}$ is only used whenever $\alpha^j_{i|k} < 0$. 
    Note that $\mathrm{Pr}[\alpha^j_{i|k} \geq 0] \geq p^j,\ \forall j \in \mathbb{I}_{[1,c]},\forall i \in \mathbb{I}_{[0,N-1]},\forall k \in \mathbb{I}_{> 0}$ and hence $\alpha^j_{i|k} < 0$ occurs with low probability.
\end{remark}

After solving optimization problem~\eqref{eq:RHC} at time step $k$, apply
\begin{equation} \label{eq:RHC:control_law}
    u_k = \pi^\star_{0|k} = v_{0|k}^\star.
\end{equation}
The proposed SMPC scheme is summarized in Algorithm~\ref{alg:SMPC}. Note that it solves~\eqref{eq:DMC} at $k=0$ and~\eqref{eq:RHC} for $k>0$. While these two optimization problems are closely related, there are differences between them. First, following the general reconditioning SMPC scheme~\eqref{eq:Rec-SMPC}, the required probability levels $p^j_{i|k}$ are defined differently for $k=0$ and $k>0$ (cf.~\eqref{eq:initial_prob} and~\eqref{eq:pjik}). Additionally, the terminal constraint of~\eqref{eq:DMC} is the maximal admissible set that exactly reformulates the original chance constraint under the terminal controller, while~\eqref{eq:RHC} uses a sufficient condition (cf. Remark~\ref{rem:terminal_constraint}). Finally, constraint~\eqref{eq:RHC:constraints_N} includes a case distinction (defined in~\eqref{eq:Cijk}, cf. Remark~\ref{rem:Cj1-4}) to formulate~\eqref{eq:Rec-SMPC:constraints} in a convex manner.

\begin{algorithm}
\caption{Proposed SMPC scheme} \label{alg:SMPC}
\begin{algorithmic}[1]
\Statex \textbf{Offline design:}
\State Choose terminal controller $K$ satisfying~\eqref{eq:b_j-norm>0}.
\State Design terminal set $\mathcal{S}_\mu$ via Algorithm~\ref{alg:terminal_set}.
\Statex \textbf{Online operation:} 
\State \textbf{Given} initial state $x_0$
\For{$k \in \mathbb{I}_{\geq 0}$}
    \State Measure $x_k$.
    \If{$k=0$}
        \State Solve~\eqref{eq:DMC}.
    \Else
        \State Update $\mathcal{C}^j_{i|k}$ $\forall i \in \mathbb{I}_{[0,N-1]},\forall j \in \mathbb{I}_{[1,c]}$ (see~\eqref{eq:Cijk}).
        \State Solve~\eqref{eq:RHC}.
    \EndIf
    \State Apply $u_k = v^\star_{0|k}$.
\EndFor
\end{algorithmic}
\end{algorithm}

\subsection{Closed-loop Analysis} \label{sec:RHC_analysis}
In this section, we show that the proposed SMPC method (Algorithm~\ref{alg:SMPC}) ensures the desired closed-loop properties.

\begin{theorem}[Closed-loop properties of Algorithm~\ref{alg:SMPC}] \label{thm:RHC}
    Suppose Assumption~\ref{ass:L} holds and~\eqref{eq:DMC} is feasible at time step $k=0$. Then, system~\eqref{eq:system} controlled by the SMPC scheme summarized in Algorithm~\ref{alg:SMPC} ensures that
    \begin{enumerate}
        \item optimization problem~\eqref{eq:RHC} remains feasible $\forall k \in \mathbb{I}_{> 0}$,
        \item chance constraints~\eqref{eq:chance_constraints} are satisfied in closed loop, and
        \item the asymptotic average cost is upper bounded by the performance of the static state-feedback controller $K$, i.e.,~\eqref{eq:lavg} is satisfied.
    \end{enumerate}
\end{theorem}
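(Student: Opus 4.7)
The plan is to organize the proof around the three asserted properties, leveraging the general reconditioning framework of Section~\ref{sec:reconditioning} for feasibility and chance constraint satisfaction, and carrying out a telescoping value-function argument for the cost bound. For recursive feasibility, I would construct an explicit feasible candidate at time $k+1$ from the optimal solution at time $k$ using the shifted reconditioned quantities $(\hat{z}_{i|k+1}, \hat{v}_{i|k+1}, \hat{\mathbf{\Phi}}^\mathrm{x}_{i|k+1}, \hat{\mathbf{\Phi}}^\mathrm{u}_{i|k+1})$ defined in~\eqref{eq:hat_zvPsi_def}--\eqref{eq:hat_zNPsiN_def}. The initialization constraint~\eqref{eq:RHC:init} holds because $x_{k+1} = A x_k + B u_k + w_k = z^\star_{1|k} + \Phi^{\mathrm{x}\star}_{1,1|k} w_k = \hat{z}_{0|k+1}$, and the nominal dynamics and the SLP recursion~\eqref{eq:SLP} are inherited from time $k$ by the shift structure. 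For each $i \in \mathbb{I}_{[0,N-1]}$ and $j \in \mathbb{I}_{[1,c]}$, I would verify by direct substitution that the shifted candidate lies in whichever of $\mathcal{C}^{j,(1)}$--$\mathcal{C}^{j,(4)}$ is selected by~\eqref{eq:Cijk}: in the non-degenerate case $\mathcal{C}^{j,(3)}$ the candidate attains~\eqref{eq:RHC_SOC_k>0} with equality by the very definition of $\alpha^j_{i|k+1}$ in~\eqref{eq:alpha_ijk}, while the remaining cases are handled via the equality-type conditions~\eqref{eq:Cijk1} and~\eqref{eq:suff_convex_neg_alpha} (with $\mathcal{C}^{j,(2)}$ being trivial). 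The terminal constraint~\eqref{eq:RHC:constraints_terminal} is satisfied with equality by~\eqref{eq:C_Nk_def} combined with~\eqref{eq:hat_zNPsiN_def}.

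For closed-loop chance constraint satisfaction, I would show that~\eqref{eq:RHC} is an instance of the reconditioning template~\eqref{eq:Rec-SMPC}: by the construction in Section~\ref{sec:RHC_SMPC}, the sets $\mathcal{C}^j_{i|k}$ imply the conditional probability inequality~\eqref{eq:Rec-SMPC:constraints} with the update rule~\eqref{eq:pjik} (exactly in $\mathcal{C}^{j,(1)}$--$\mathcal{C}^{j,(3)}$ and as a sufficient condition in $\mathcal{C}^{j,(4)}$), while the terminal set~\eqref{eq:C_Nk_def} forces the tail statistics to coincide with the reconditioned shifted prediction so that the conditional probability levels are non-decreasing along the closed-loop trajectory. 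Proposition~\ref{prop:Rec-SMPC}, combined with Proposition~\ref{prop:DMC} to anchor the initial probability levels at $p^j_{t|0} = p^j$ through the maximal admissible terminal set $\mathcal{S}_\mu$, then delivers~\eqref{eq:chance_constraints} in closed loop after taking total expectation.

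For the asymptotic average cost, let $V^\star_k$ denote the optimal value of~\eqref{eq:RHC} at time $k$. Using the shifted candidate from above, I would upper bound $V^\star_{k+1}$ by its cost evaluation and then take the expectation conditioned on $w_{0:k-1}$. The Lyapunov equations~\eqref{eq:p_Lyapunov}--\eqref{eq:P_Lyapunov} absorb the terminal-cost cross terms and produce a one-step descent of the form $\mathbb{E}[V^\star_{k+1}\,|\,w_{0:k-1}] - V^\star_k \leq -\ell(x_k, u_k) + \mathrm{tr}(P\Sigma^\mathrm{w})$, in the spirit of~\cite[Chapter~8]{kouvaritakis2016model}. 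Telescoping this inequality, taking total expectation, and dividing by $T$ yields~\eqref{eq:lavg}.

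The main obstacle will be the careful verification that the shifted candidate remains feasible under every branch of the case distinction~\eqref{eq:Cijk}, especially across time steps where the active case changes between $k$ and $k+1$ (e.g.~$\alpha^j_{i|k+1}$ switching sign, or the degeneracy test~\eqref{eq:pjik_degenerate} toggling due to the realization of $w_k$). Ensuring that the cost descent in claim~3 is not spoiled by the conservative tightening $\mathcal{C}^{j,(4)}$, whose low-probability occurrence is controlled by Remark~\ref{rem:Cj1-4}, will also require care since the shifted candidate then only matches the previous variance rather than the one that would be optimal in~\eqref{eq:J}.
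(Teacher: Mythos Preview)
Your proposal is correct and follows essentially the same three-part structure as the paper's proof: the shifted candidate $(\hat{z}_{i|k+1}, \hat{v}_{i|k+1}, \hat{\mathbf{\Phi}}^\mathrm{x}_{i|k+1}, \hat{\mathbf{\Phi}}^\mathrm{u}_{i|k+1})$ for recursive feasibility, the reconditioning argument of Proposition~\ref{prop:Rec-SMPC} (anchored at $k=0$ via the terminal set $\mathcal{S}_\mu$) for chance constraint satisfaction, and the Lyapunov-based telescoping inequality $\mathbb{E}_{w_k}[J^\star_{k+1}\mid w_{0:k-1}] - J^\star_k \leq \mathrm{tr}(P\Sigma^\mathrm{w}) - \ell(x_k,u_k)$ for the cost bound. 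The obstacles you flag in the last paragraph are not genuine: since the constraint sets $\mathcal{C}^j_{i|k+1}$ in~\eqref{eq:Cijk} are \emph{defined} through the shifted candidate itself, the candidate satisfies whichever branch is active with equality regardless of case transitions, and the cost-descent computation~\eqref{eq:proofs:beg_difference}--\eqref{eq:proofs:cost_decrease} depends only on the candidate's cost evaluation, not on which constraint branch $\mathcal{C}^{j,(1)}$--$\mathcal{C}^{j,(4)}$ happens to be active.
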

\begin{proof}
    The proof can be found in Appendix~\ref{app:RHC_proof}.
\end{proof}

The derived asymptotic average cost bound is consistent with the performance bounds of existing SMPC methods. However, compared to existing SMPC methods that use online-optimized policies~\cite{oldewurtel2008tractable, prandini2012randomized, mark2022recursively, pan2023data, li2024distributionally, knaup2024recursively}, we uniquely ensure closed-loop chance constraint satisfaction.
The improved performance of the proposed method compared to the fixed-feedback methods of~\cite{hewing2018stochastic, hewing2020recursively, kohler2022recursively, schluter2022stochastic} comes at an increased computational cost: instead of solving an SOCP, those methods require solving a QP, with significantly fewer optimization variables. 

\section{Numerical Example} \label{sec:numerical}

\par This section aims to showcase the practical applicability and the benefits of the proposed SMPC method (Algorithm~\ref{alg:SMPC}) on a building temperature control example.

The simulations are carried out using Python on a machine equipped with an Intel i7-12700H (2.30 GHz) processor with 32 GB of RAM. The SOCPs~\eqref{eq:DMC} and~\eqref{eq:RHC} and the LMIs of Algorithm~\ref{alg:terminal_set} are solved with MOSEK~\cite{mosek}, and OSQP~\cite{osqp} is used to solve QPs. In all cases, the solvers are interfaced using CVXPY~\cite{diamond2016cvxpy}.\footnote{The code is available online. \doi{10.3929/ethz-c-000802731}\\ https://gitlab.ethz.ch/ics/online-optimized-smpc}

\par \textit{Setup:} We consider the building temperature control problem adapted  from~\cite{oldewurtel2008tractable} with system matrices
\begin{equation*}
    A = \begin{bmatrix}
        0.8511 & 0.0541  & 0.0707\\ 0.1293 & 0.8635 & 0.0055 \\ 0.0989 & 0.0032 & 0.7541
    \end{bmatrix},\quad B = \begin{bmatrix}
        0.35 \\ 0.03 \\ 0.02
    \end{bmatrix},
\end{equation*}
where the first state $[x]_1$ represents the room temperature, and $[x]_2$ and $[x]_3$ are the temperatures in the walls connected with another room, and the outside, respectively. The single input variable $u$ corresponds to heating and cooling. The system is affected by disturbances $w_k \stackrel{\text{i.i.d.}}{\sim}\mathcal{N}(0, E^\top E)$, where
\begin{equation*}
    E = 10^{-3} \cdot \begin{bmatrix}
        22.2170 & 1.7912  & 42.2123 \\ 1.5376 & 0.6944 & 2.9214 \\ 103.1813 & 0.1032 & 196.0444
    \end{bmatrix},
\end{equation*}
and $[w]_1$ corresponds to the outside temperature, $[w]_2$ is the solar radiation, and $[w]_3$ represents internal heat gains. 

The goal is to minimize the heating cost, which is given by a quadratic cost on the input. Furthermore, the room temperature should be kept above 21\textdegree C with a probability of $70\%$. We shift the coordinate system such that the origin corresponds to the noise-free equilibrium at 21.5\textdegree C, and $x,u$ correspond to deviations from this equilibrium. Thus, we have cost function
\begin{equation*}
    \ell(x,u) = u^\top R  u + r^\top u,
\end{equation*}
with $R = 1$ and $r = 7$, and chance constraint
\begin{equation} \label{eq:num:constraint}
    \mathrm{Pr}[\begin{bmatrix} 1 & 0 & 0\end{bmatrix} x_k \geq -0.5] \geq 0.7,\quad \forall k \in \mathbb{I}_{\geq 0}.
\end{equation}

\par \textit{Approaches:} The following SMPC schemes relying on online-optimized feedback are compared:

\begin{itemize}
    \item Reconditioning (\textbf{RC}) SMPC : The proposed reconditioning SMPC scheme from Algorithm~\ref{alg:SMPC}.
    \item Modified Reconditioning SMPC (\textbf{RC-mod}): The same as RC, except that the terminal constraint~\eqref{eq:C_Nk_def} is replaced by directly enforcing~\eqref{eq:SOC_k>0_tail} $\forall i \in \mathbb{I}_{[0,\hat{\mu}]}$ (cf. Remark~\ref{rem:terminal_constraint}).
    \item Closed-loop Prediction (\textbf{CLP}) SMPC~\cite[Def.~6]{oldewurtel2008tractable}:
    chance constraints are enforced by computing a probability bound on the magnitude of the disturbances and then applying robust constraint satisfaction. No terminal constraint is used, soft constraints are implemented to ensure recursive feasibility, and closed-loop chance constraints are not necessarily guaranteed. 
    \item Indirect feedback (\textbf{IF}) SMPC: The combination of indirect-feedback SMPC~\cite{hewing2020recursively} with the system level parametrization~\cite{anderson2019system}, using the terminal set design proposed in Section~\ref{sec:Mode_2}.
    \item Additionally, we also carry out the simulations using the non-receding-horizon policy~\eqref{eq:dual_mode_law} for reference.
\end{itemize}

All controllers use a prediction horizon length of $N = 6$. The terminal controller is $K=0$ in all cases, yielding $ \ell_\mathrm{f} = 0$.
The terminal set is computed with Algorithm~\ref{alg:terminal_set}, resulting in $\mu = 59$.
The offline design took approximately 45 minutes, but this time can be reduced by choosing a good initial guess for $\nu$ and $\mu$. The upper bound on the indices for the terminal constraint of RC-mod is chosen as $\hat{\mu} = \mu$.
The simulation is carried out with $5\cdot 10^3$ random disturbance sequence realizations, and the simulation length is 10 time steps, starting from the initial condition $x_0 = \begin{bmatrix} 0.5 & 0 & 0 \end{bmatrix}^\top$ (i.e., from the initial room temperature of 22\textdegree C).

\begin{figure}
	\centering
	\includegraphics[scale=1]{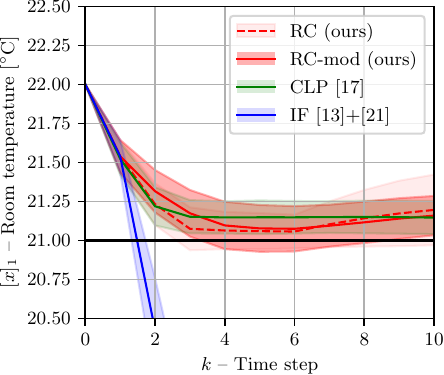}
	\caption{Closed-loop evolution of the room temperature using the two variations of the proposed method (red), the method of~\cite{oldewurtel2008tractable} (green), and the combination of indirect feedback SMPC~\cite{hewing2020recursively} with SLP~\cite{anderson2019system} (blue), for $5 \cdot 10^3$ random disturbance sequence realizations. The thick lines represent the means, and the confidence bounds correspond to $\pm$1 standard deviation.} 
 \label{fig:hvac}
\end{figure}

\par \textit{Results:} Figure~\ref{fig:hvac} presents the closed-loop evolution of the room temperature over time. As it can be clearly seen, the proposed methods RC and RC-mod are able to operate much closer to the constraints compared to CLP, by directly using the probabilistic information instead of sufficient deterministic conditions. It can also be seen that IF fails to solve the problem, as it is unable to properly address the chance constraints. 

Figure~\ref{fig:hvac} also shows a potential limitation of RC: the imposed terminal constraint~\eqref{eq:C_Nk_def} limits the SMPC from significantly deviating from the initially optimized solution for $k\geq N$, i.e., beyond the prediction horizon. 
In fact, the initially optimized policy~\eqref{eq:dual_mode_law} is very close to RC for this example. 
This is no longer the case when the terminal set is modified according to RC-mod (Remark~\ref{rem:terminal_constraint}).
Although CLP does in general not ensure closed-loop chance constraint satisfaction, they are satisfied in this example due to the absence of input constraints.

Since RC and RC-mod can approach the constraint closer than CLP, they are able to apply smaller inputs, resulting in a lower cost, as shown by Table~\ref{tab:cost}, but it can also be seen that RC can only marginally improve over the performance of policy~\eqref{eq:dual_mode_law}. The conservative nature of CLP is also highlighted by the minimal empirical constraint satisfaction level of 91.2\%, while RC and RC-mod satisfy the chance constraint tightly. Table~\ref{tab:cost} also highlights that IF greatly violates the chance constraint, and thus the low cost it achieves is meaningless. Finally, Table~\ref{tab:cost} also compares the time it took to solve the underlying optimization problems: solving the QP of CLP takes significantly less computational effort compared to the SOCPs of RC, RC-mod, and IF.

In order to investigate the effect of the required probability level $p$ on the closed-loop performance, we carry out the simulation for different values of $p$ ranging from 0.6 to 0.8, and compare RC-mod with CLP. The results are summarized in Table~\ref{tab:p_effect}. As expected, the accumulated closed-loop cost increases with the increase of $p$, and the proposed method outperforms CLP in terms of the closed-loop cost across all values of $p$.

In summary, the proposed method improves performance compared to~\cite{oldewurtel2008tractable} and also comes with stronger closed-loop chance constraint satisfaction guarantees, at the expense of a more complicated offline terminal set design and increased online computational effort.

\begin{table}
\caption{Comparison of the different methods in terms of the mean and the standard deviation of the accumulated closed-loop costs and solution times, and the minimal empirical constraint satisfaction probabilities (for $5\cdot10^3$ random disturbance sequence realizations)}
\label{tab:cost}
\begin{center}
    \begin{tabular}{@{}lccc@{}} \toprule
    {} & {Closed-loop cost} & {\shortstack{Min. satisfaction \\ level [$\%$]}} & {Solvetime [ms]} \\ \midrule
    RC (ours)  & $-18.86\pm 4.43$ & $69.4$ & $12.90\pm6.90$ \\
    RC-mod (ours)  & $-19.65\pm5.55$  & $69.7$ & $21.70\pm7.37$  \\
    CLP~\cite{oldewurtel2008tractable}  & $-18.72\pm7.54$  & $91.2$& $1.96\pm3.98$  \\ 
    IF~\cite{hewing2020recursively}+\cite{anderson2019system} & ($-115.67\pm3.49$) & $0.0$ & $27.99\pm7.27$\\
    Policy~\eqref{eq:dual_mode_law}  & $-18.85\pm4.43$  & $69.6$& $-$  \\\bottomrule
\end{tabular}
\end{center}
\end{table}

\begin{table}
\caption{The effect of $p$ on the mean and the standard deviation of the closed-loop cost accumulated by the proposed method and the method of~\cite{oldewurtel2008tractable} (for $10^4$ random disturbance sequence realizations).}
\label{tab:p_effect}
\begin{center}
    \begin{tabular}{@{}ccc@{}} \toprule
    {} & \multicolumn{2}{c}{Closed-loop cost} \\ 
    {$p$} & {RC-mod (ours)} & {CLP~\cite{oldewurtel2008tractable}} \\ \midrule
    $0.6$  & $-20.292\pm5.316$  & $-19.110\pm7.457$ \\
    $0.7$  & $-19.662\pm5.515$  & $-18.739\pm7.470$  \\
    $0.8$  & $-18.912\pm5.734$  & $-18.277\pm7.487$  \\ 
\bottomrule
\end{tabular}
\end{center}
\end{table}

\section{Conclusion} \label{sec:conclusion}

We proposed an SMPC method for linear time-invariant systems affected by additive Gaussian disturbances. The proposed method optimizes affine feedback policies online in a receding horizon fashion, by solving a finite horizon convex optimization problem at every time step. A finitely determined maximal admissible set was designed, which, combined with the idea of reconditioning on the current knowledge at every time step, guarantees closed-loop chance constraint satisfaction and recursive feasibility. The theoretical results were verified and the proposed method's applicability was demonstrated on a building temperature control example. Extension of the current results to more general problem settings including uncertain systems, nonlinear systems, output feedback, and subgaussian or bounded-variance disturbance distributions is subject of future work.

\appendix

\subsection{Proof of Lemma~\ref{lem:SOC_reform}} \label{app:SOC}
\begin{proof}
    A probabilistic half-space constraint of the form
    \begin{equation*}
        \mathrm{Pr}[h^\top y \leq g] \geq p_\mathrm{y},
    \end{equation*}
    with $y \sim \mathcal{N}(\mu_\mathrm{y}, \Sigma_\mathrm{y}),\ y \in \mathbb{R}^{n_\mathrm{y}}$, $h \in  \mathbb{R}^{n_\mathrm{y}}$, $g \in \mathbb{R}$, and $p_\mathrm{y} \in [0.5, 1)$ is equivalent to
    \begin{equation*}
        h^\top \mu_\mathrm{y} \leq g - \sqrt{\chi^2(2p_\mathrm{y}-1)}\sqrt{h^\top \Sigma_\mathrm{y} h},
    \end{equation*}
    see, e.g.,~\cite{hewing2020recursively}. Applying this result to~\eqref{eq:chance_constraint_j} combined with~\eqref{eq:SLP:ux} results in~\eqref{eq:SOC_k=0_Psi}. Furthermore, using
    \begin{equation*}
        \mathrm{vec}(ABC) = (C^\top \otimes A) \mathrm{vec}(B),
    \end{equation*}
    we obtain
    \begin{equation*}
    \begin{split}
       \mathbf{\Sigma}^{\mathrm{w},1/2}_i& \begin{bmatrix} (\mathbf{\Phi}^{\mathrm{x}}_{i|0})^\top & (\mathbf{\Phi}^{\mathrm{u}}_{i|0})^\top\end{bmatrix} \begin{bmatrix} G_j^\top \\ H_j^\top \end{bmatrix} \\ &= \mathrm{vec}\left((\mathbf{\Sigma}^{\mathrm{w}}_i)^{1/2} \begin{bmatrix} (\mathbf{\Phi}^{\mathrm{x}}_{i|0})^\top & (\mathbf{\Phi}^{\mathrm{u}}_{i|0})^\top\end{bmatrix} \begin{bmatrix} G_j^\top \\ H_j^\top \end{bmatrix}\right) \\
        &= \left(\begin{bmatrix} G_j & H_j \end{bmatrix} \otimes (\mathbf{\Sigma}^{\mathrm{w}}_i)^{1/2}\right) \mathrm{vec}\left(\begin{bmatrix} (\mathbf{\Phi}^{\mathrm{x}}_{i|0})^\top & (\mathbf{\Phi}^{\mathrm{u}}_{i|0})^\top\end{bmatrix}\right) \\
        & = \left(\begin{bmatrix} G_j & H_j \end{bmatrix} \otimes (\mathbf{\Sigma}^{\mathrm{w}}_i)^{1/2}\right) \psi_{i|0},
    \end{split}
    \end{equation*}
    and thus the equivalence of~\eqref{eq:SOC_k=0} and~\eqref{eq:SOC_k=0_Psi} is proven.
\end{proof}

\subsection{Terminal controller synthesis satisfying Assumption~\ref{ass:L}} \label{app:K_design}
In order to formulate a convex optimization problem that can be solved in practice, the strict inequality~\eqref{eq:b_j-norm>0} in Assumption~\ref{ass:L} is approximated by $b_j -  \sqrt{\tilde{p}^j}\norm{(\Sigma_\infty^{\mathrm{x}})^{1/2} C_K^\top L_j^\top} \geq \varepsilon_j$, where $\varepsilon_j \in \mathbb{R}_{>0}$ is sufficiently small. Then, using the Schur complement, we can formulate the following convex semidefinite program (SDP) to design $K$:
\begin{subequations} \label{eq:K_design}
    \begin{align}
        \min_{\Sigma^{\mathrm{x}}_\infty,Y,t} \quad &t \\
        \mathrm{s.t.} \quad\ &\Sigma^{\mathrm{x}}_\infty = (\Sigma^{\mathrm{x}}_\infty)^\top \preceq tI_n, \\
        &\begin{bmatrix}
            \Sigma^{\mathrm{x}}_\infty - \Sigma^\mathrm{w} & A\Sigma^{\mathrm{x}}_\infty + BY \\ (A\Sigma^{\mathrm{x}}_\infty + BY)^\top & \Sigma^{\mathrm{x}}_\infty
        \end{bmatrix} \succeq 0, \\
         &\begin{bmatrix}
            (b_j - \varepsilon_j)^2 & L_j(C\Sigma^{\mathrm{x}}_\infty + DY) \\ (C\Sigma^{\mathrm{x}}_\infty + DY)^\top L_j^\top & (1/\tilde{p}_j)\Sigma^{\mathrm{x}}_\infty
        \end{bmatrix} \succeq 0, \\
        &\forall j \in \mathbb{I}_{[1,c]},
    \end{align}
\end{subequations}
where $t \in \mathbb{R}_{\geq 0}$, $\Sigma^{\mathrm{x}}_\infty \in \mathbb{R}^{n\times n}$, $Y \in \mathbb{R}^{m\times n}$. Infeasibility of~\eqref{eq:K_design} means that the desired probability level $p^j$ is too high for the given constraint and disturbance covariance. If~\eqref{eq:K_design} is feasible, a suitable choice for $K$ satisfying Assumption~\ref{ass:L} is
\begin{equation*}
    K = Y^\star (\Sigma^{\mathrm{x}\star}_\infty)^{-1}.
\end{equation*}

\subsection{Proof of Theorem~\ref{thm:terminal_set}} \label{app:terminal_set_proof}
    Throughout the proof, we will make use of the relaxed version of inequality~\eqref{eq:SOC_k=0_tail_psi} and set $\mathcal{D}_i$:
    \begin{equation}
         G_{K,j} A_K^i z \leq b_j - \sqrt{\tilde{p}^j} \norm{
            \left( G_{K,j} A_K^i \otimes (\mathbf{\Sigma}^{\mathrm{w}}_N)^{1/2}\right) \psi}, \label{eq:SOC_k=0_tail_psi_relaxed}
    \end{equation}
    which corresponds to~\eqref{eq:SOC_k=0_tail_psi} with $\Sigma^\mathrm{x}_i = 0$, and
    \begin{equation*}
         \hat{\mathcal{D}}_i = \myset{(z, \psi)}{\eqref{eq:SOC_k=0_tail_psi_relaxed} \text{ holds for }i,\ \forall j \in \mathbb{I}_{[1, c]}};
    \end{equation*}
    furthermore, $\hat{\mathcal{S}}_i$ and $\hat{\mathcal{S}}_\infty$ can be defined analogously to $\mathcal{S}_i$ and $\mathcal{S}_\infty$. Then, it is easy to see that
    \begin{equation} \label{eq:hat_tilde_order}
        \bar{\mathcal{D}}_i \subseteq \mathcal{D}_i \subseteq \hat{\mathcal{D}}_i, \quad \bar{\mathcal{S}}_i \subseteq \mathcal{S}_i \subseteq \hat{\mathcal{S}}_i, \quad \bar{\mathcal{S}}_\infty \subseteq \mathcal{S}_\infty \subseteq \hat{\mathcal{S}}_\infty.
    \end{equation}
    Finally, consider inequality
    \begin{equation} \label{eq:proofs:L_hat_ineq}
        L_jy + \sqrt{\tilde{p}^j}\norm{(L_j \otimes (\mathbf{\Sigma}^{\mathrm{w}}_N)^{1/2})\tilde{y}}\leq b_j,
    \end{equation}
    where $y \in \mathbb{R}^d$, $\tilde{y} \in \mathbb{R}^{Nnd}$, and define the set
    \begin{equation} \label{eq:proofs:L_hat}
        \hat{\mathcal{L}} = \myset{\begin{bmatrix} y \\\tilde{y}\end{bmatrix} \in \mathbb{R}^{d+Nnd}}{\eqref{eq:proofs:L_hat_ineq}\text{ holds }\forall j \in \mathbb{I}_{[1,c]}}.
    \end{equation}
\begin{proof}    
    The proof extends arguments from the maximal admissible set characterizations presented in~\cite{gilbert1991linear, li2021chance}. In Part I, we establish the boundedness of $\hat{\mathcal{L}}$, which is then combined with observability and Schur stability to prove finite determination and boundedness of $\hat{\mathcal{S}}_\infty$ in Part II. Finally, Part III proves the finite determination of $\mathcal{S}_\infty$ based on $\hat{\mathcal{S}}_\infty$.
    \par \textit{Part I: Boundedness of $\hat{\mathcal{L}}$.} It can be seen that~\eqref{eq:proofs:L_hat_ineq} is equivalent to the following pair of inequalities:
    \begin{equation} \label{eq:proofs:L_hat_ineq_pair_1}
        \tilde{p}^j\tilde{y}^\top(L_j^\top L_j \otimes \mathbf{\Sigma}^\mathrm{w}_N)\tilde{y} \leq (b_j-L_jy)^2,\quad L_jy \leq b_j.
    \end{equation}
    Using the fact that 
    \begin{equation*}
        L_j^\top L_j \otimes \mathbf{\Sigma}^\mathrm{w}_N = \mathcal{P}^\top(\mathbf{\Sigma}^\mathrm{w}_N \otimes L_j^\top L_j)\mathcal{P},
    \end{equation*}
    where $\mathcal{P}$ is a permutation matrix (see, e.g.,~\cite{henderson1981vec}), and introducing
    \begin{equation} \label{eq:eta_def}
        \eta := ((\mathbf{\Sigma}^{\mathrm{w}}_N)^{1/2} \otimes I_d)\mathcal{P}\tilde{y},
    \end{equation}
    inequality~\eqref{eq:proofs:L_hat_ineq_pair_1} can be further written as
    \begin{equation} \label{eq:proofs:L_hat_ineq_pair_2}
        \eta^\top(\tilde{p}^jI_{Nn} \otimes L_j^\top L_j)\eta \leq (b_j-L_jy)^2,\quad L_jy \leq b_j.
    \end{equation}
    Note that $((\mathbf{\Sigma}^{\mathrm{w}}_N)^{1/2} \otimes I_d)\mathcal{P}$ is invertible, since $\mathcal{P}$ is orthogonal and $\mathbf{\Sigma}^{\mathrm{w}}_N \succ 0$. Taking the partition
    \begin{equation*}
        \eta^\top = \begin{bmatrix} \eta_1^\top & ... & \eta_{Nn}^\top\end{bmatrix},\quad \eta_i \in \mathbb{R}^d,\ \forall i \in \mathbb{I}_{[1,Nn]},
    \end{equation*}
    conditions~\eqref{eq:proofs:L_hat_ineq_pair_2} can be written as
    \begin{equation*} \label{eq:proofs:L_hat_ineq_pair_3}
        \tilde{p}^j\sum_{i=1}^{Nn}\eta_i^\top L_j^\top L_j \eta_i \leq (b_j-L_jy)^2,\quad L_jy \leq b_j,
    \end{equation*}
    which implies
    \begin{equation} \label{eq:proofs:L_hat_ineq_pair_4}
        \tilde{p}^j(L_j \eta_i)^2 \leq (b_j-L_jy)^2,\ \forall i \in \mathbb{I}_{[1,Nn]},\quad L_jy \leq b_j,
    \end{equation}
    due to $L_j^\top L_j \succeq 0$. Finally,~\eqref{eq:proofs:L_hat_ineq_pair_4} further implies
    \begin{equation*}
        \sqrt{\tilde{p}^j}L_j\eta_i \leq b_j - L_j y,\ \forall i \in \mathbb{I}_{[1,Nn]},\quad L_jy \leq b_j.
    \end{equation*}
    We know that if $\begin{bmatrix}y^\top & \tilde{y}^\top\end{bmatrix}^\top \in \hat{\mathcal{L}}$, then $Ly\leq b$, i.e., $y \in \mathcal{L}$, which is bounded, implying that $\underline{b}_j = \min_{y\in \mathcal{L}} L_jy$ is a well-defined finite number; furthermore, due to $b > 0$, $\underline{b}_j \leq 0$. Therefore, if $\begin{bmatrix}y^\top & \tilde{y}^\top \end{bmatrix}^\top \in \hat{\mathcal{L}}$, then 
    \begin{equation*}
        \sqrt{\tilde{p}^j}L_j\eta_i \leq \max_{y\in \mathcal{L}}(b_j - L_j y),\ \forall i \in \mathbb{I}_{[1,Nn]},\ \forall j \in \mathbb{I}_{[1,c]},
    \end{equation*}
    i.e.,
    \begin{equation*}
        \sqrt{\tilde{p}^j}L_j\eta_i \leq b_j - \underline{b}_j,\ \forall i \in \mathbb{I}_{[1,Nn]},\ \forall j \in \mathbb{I}_{[1,c]}.
    \end{equation*}
    Define $\bar{b} \in \mathbb{R}^c_{> 0}$, the $j$-th row of which is $\bar{b}_j = 1/\sqrt{\tilde{p}^j} (b_j - \underline{b}_j) > 0$. Since $\mathcal{L}$ is bounded, the set $\myset{\eta_i}{L\eta_i \leq \bar{b}}$ is also bounded $\forall i \in \mathbb{I}_{[1, Nn]}$. Thus, $\exists M \in \mathbb{R}_{>0}$ such that $\norm{\eta} \leq M,\ \forall \begin{bmatrix}y^\top & \tilde{y}^\top \end{bmatrix}^\top \in \hat{\mathcal{L}} $, i.e., according to~\eqref{eq:eta_def},
    \begin{equation*}
        \norm{((\mathbf{\Sigma}^{\mathrm{w}}_N)^{1/2} \otimes I_d)\mathcal{P}\tilde{y}} \leq M,
    \end{equation*}
    and thus
     \begin{equation*}
        \begin{split}
        \norm{\tilde{y}} &= \norm{\left(((\mathbf{\Sigma}^{\mathrm{w}}_N)^{1/2} \otimes I_d)\mathcal{P}\right)^{-1}((\mathbf{\Sigma}^{\mathrm{w}}_N)^{1/2} \otimes I_d)\mathcal{P}\tilde{y}} \\
        &\leq \norm{\left(((\mathbf{\Sigma}^{\mathrm{w}}_N)^{1/2} \otimes I_d)\mathcal{P}\right)^{-1}} M.
        \end{split}
    \end{equation*}
    That is, both $y$ and $\tilde{y}$ are bounded, proving that the set $\hat{\mathcal{L}}$ is bounded.
    \par \textit{Part II: Finite determination of $\hat{\mathcal{S}}_\infty$.} Note that $L_jC_K = G_{K,j}$ due to $G = LC$ and $H = LD$, and thus~\eqref{eq:SOC_k=0_tail_psi_relaxed} can be equivalently written as
    \begin{equation*} \label{eq:proofs:L_hat_ineq_CA}
        L_jC_KA_K^iz + \sqrt{\tilde{p}^j}\norm{(L_jC_KA_K^i \otimes(\mathbf{\Sigma}^{\mathrm{w}}_N)^{1/2})\psi}\leq b_j.
    \end{equation*}
    Introducing the notation
    \begin{align*}
        \zeta^\top &= \begin{bmatrix}z^\top & \psi^\top \end{bmatrix} \in \mathbb{R}^{Nn^2+n}, \\
        \mathbf{A} &= \begin{bmatrix}
            A_K & \\ & A_K \otimes I_{Nn}
        \end{bmatrix} \in \mathbb{R}^{(Nn+1)n \times (Nn+1)n}, \\
        \mathbf{C} &= \begin{bmatrix}
            C_K & \\ & C_K \otimes I_{Nn}
        \end{bmatrix} \in \mathbb{R}^{(Nn+1)d \times (Nn+1)n},
    \end{align*}
    it can be seen that $(z, \psi)\in \hat{\mathcal{D}}_i \iff \mathbf{C} \mathbf{A}^i \zeta \in \hat{\mathcal{L}}$ and thus $(z, \psi)\in \hat{\mathcal{S}}_k \iff \mathbf{C} \mathbf{A}^i \zeta \in \hat{\mathcal{L}},\ \forall i \in \mathbb{I}_{[1,k]}$. Furthermore,
    \begin{align*}
        \mathbf{C} \mathbf{A}^i &= \begin{bmatrix}
            C_KA_K^i & \\ & C_KA_K^i \otimes I_{Nn}
        \end{bmatrix} \nonumber \\
        &= \begin{bmatrix}
            C_KA_K^i & \\ & \mathcal{P}_1(I_{Nn}  \otimes C_KA_K^i)\mathcal{P}_2
        \end{bmatrix} \nonumber \\
        &= \begin{bmatrix}
            I_d & \\ & \mathcal{P}_1
        \end{bmatrix}\begin{bmatrix}
            C_KA_K^i & \\ & I_{Nn}  \otimes C_KA_K^i
        \end{bmatrix} \begin{bmatrix}
            I_n & \\ & \mathcal{P}_2
        \end{bmatrix} \nonumber \\
        &= \mathcal{P}_3(I_{Nn+1}  \otimes C_KA_K^i)\mathcal{P}_4 \nonumber \\
        &= \mathcal{P}_3\mathcal{P}_5(C_KA_K^i \otimes I_{Nn+1})\mathcal{P}_6\mathcal{P}_4 \nonumber \\
        &= \mathcal{P}_7(C_KA_K^i \otimes I_{Nn+1})\mathcal{P}_8, 
    \end{align*}
    where $\mathcal{P}_j, j \in \mathbb{I}_{[1,8]}$ are permutation matrices that depend only on the dimensions of $C_KA_K^i$ and $I_{Nn}$, i.e., they are the same for every $i$. Define observability matrices $\mathcal{O}^\top = \begin{bmatrix} C_K^\top & A_K^\top C_K^\top & ... & A_K^{n-1\top} C_K^\top\end{bmatrix}$ and
    \begin{equation*}
        \begin{split}
            \hat{\mathcal{O}} &= \begin{bmatrix}
                \mathbf{C} \\ \mathbf{C}\mathbf{A} \\ \vdots \\ \mathbf{C}\mathbf{A}^{n-1}
            \end{bmatrix} =
            \begin{bmatrix}
                \mathcal{P}_7(C_K \otimes I_{Nn+1})\mathcal{P}_8 \\ \mathcal{P}_7(C_KA_K \otimes I_{Nn+1})\mathcal{P}_8 \\ \vdots \\ \mathcal{P}_7(C_KA_K^{n-1} \otimes I_{Nn+1})\mathcal{P}_8
            \end{bmatrix}\\
            &= (I_n \otimes \mathcal{P}_7) \begin{bmatrix}
                C_K \otimes I_{Nn+1} \\ C_KA_K \otimes I_{Nn+1} \\ \vdots \\ C_KA_K^{n-1} \otimes I_{Nn+1}
            \end{bmatrix} \mathcal{P}_8 \\
            &= (I_n \otimes \mathcal{P}_7) (\mathcal{O} \otimes I_{Nn+1}) \mathcal{P}_8.
        \end{split}
    \end{equation*}
    As $(C_K, A_K)$ is observable, $\mathrm{rank}(\mathcal{O}) = n$. Since row and column permutations do not change the rank,
    \begin{equation*}
        \begin{split}
            \mathrm{rank}(\hat{\mathcal{O}}) &= \mathrm{rank}(\mathcal{O} \otimes I_{Nn+1}) \\ & =\mathrm{rank}(\mathcal{O})\mathrm{rank}(I_{Nn+1}) = n(Nn+1),
        \end{split}
    \end{equation*}
    proving that $(\mathbf{C},\mathbf{A})$ is observable. Furthermore, $\mathbf{A}$ is Schur due to $A_K$ being Schur. This, combined with the boundedness of $\hat{\mathcal{L}}$ and the fact that $0 \in \mathrm{int}(\hat{\mathcal{L}})$ (owing to $b > 0$), implies that $\hat{\mathcal{S}}_\infty$ is bounded and it is finitely determined, according to~\cite[Theorems~2.1 and 4.1]{gilbert1991linear}. That is, $\exists \mu_0 \in \mathbb{I}_{\geq 0}$ such that $\hat{\mathcal{S}}_{\mu_0} = \hat{\mathcal{S}}_\infty$ and $\hat{\mathcal{S}}_{\mu_0}$ is bounded.
    \par \textit{Part III: Finite determination of $\mathcal{S}_\infty$.} From~\eqref{eq:hat_tilde_order}, the boundedness of $\hat{\mathcal{S}}_{\mu_0}$ implies that $\mathcal{S}_\mu \subseteq \mathcal{S}_{\mu_0} \subseteq \hat{\mathcal{S}}_{\mu_0}$ is also bounded $\forall \mu \in \mathbb{I}_{\geq \mu_0}$. Combining this result with $\lim_{i \rightarrow \infty} A_K^i = 0$, it is true that $\forall \varepsilon_j \in \mathbb{R}_{>0}\ \exists \mu \in \mathbb{I}_{\geq\mu_0}$ such that
    \begin{align*}
        G&_{K,j}A_K^i z + \sqrt{\tilde{p}^j} \norm{\begin{bmatrix}
            \left( G_{K,j} A_K^i \otimes (\mathbf{\Sigma}^{\mathrm{w}}_N)^{1/2}\right) \psi\\ (\Sigma_\infty^{\mathrm{x}})^{1/2} G_{K,j}^\top
        \end{bmatrix}} \\
        &- \sqrt{\tilde{p}^j}\norm{(\Sigma_\infty^{\mathrm{x}})^{1/2} G_{K,j}^\top} \leq \varepsilon_j,\quad \forall (z,\psi) \in \mathcal{S}_\mu,\ \forall i \in \mathbb{I}_{> \mu}. \nonumber
    \end{align*}
    Thus, for $\varepsilon_j = b_j - \sqrt{\tilde{p}^j}\norm{(\Sigma_\infty^{\mathrm{x}})^{1/2} G_{K,j}^\top} > 0: \exists \mu \in \mathbb{I}_{\geq \mu_0}$ such that
    \begin{equation*}
        G_{K,j} A_K^i z + \sqrt{\tilde{p}^j} \norm{\begin{bmatrix}
            \left( G_{K,j} A_K^i \otimes (\mathbf{\Sigma}^{\mathrm{w}}_N)^{1/2}\right) \psi\\ (\Sigma_\infty^{\mathrm{x}})^{1/2} G_{K,j}^\top
        \end{bmatrix}} \leq b_j,
    \end{equation*}
    $\forall i \in \mathbb{I}_{> \mu},\ \forall j \in \mathbb{I}_{[1,c]},\ \forall (z,\psi) \in \mathcal{S}_\mu$, in other words,
    \begin{equation*}
        \mathcal{S}_\mu \subseteq \bigcap_{i=\mu+1}^\infty \bar{\mathcal{D}}_i \stackrel{\eqref{eq:hat_tilde_order}}{\subseteq} \bigcap_{i=\mu+1}^\infty \mathcal{D}_i.
    \end{equation*}
    Consequently,
    \begin{equation}
        \mathcal{S}_\infty = \mathcal{S}_\mu \cap  \left(\bigcap_{i=\mu+1}^\infty \mathcal{D}_i\right) \supseteq \mathcal{S}_\mu \cap \mathcal{S}_\mu = \mathcal{S}_\mu,
    \end{equation}
    i.e., $\mathcal{S}_\infty \supseteq \mathcal{S}_\mu$. On the other hand, $\mathcal{S}_\infty \subseteq\mathcal{S}_\mu$ by construction, resulting in $\mathcal{S}_\infty = \mathcal{S}_\mu$, concluding the proof.
\end{proof}

\subsection{Proof of Proposition~\ref{prop:terminal_set_alg}} \label{app:algorithm_proof}

\begin{proof}
The idea behind Algorithm~\ref{alg:terminal_set} and the proof is the following. Step~\ref{algstep:nu} finds the smallest index $\nu$ that defines the maximal admissible set with respect to the tightened constraint~\eqref{eq:SOC_k=0_tail_psi_tightened}. Consequently, Step~\ref{algstep:mu} then gives the smallest index $\mu$ such that imposing the original time-varying constraint~\eqref{eq:SOC_k=0_tail_psi} $\forall i \in \mathbb{I}_{[0,\mu]}$ implies that~\eqref{eq:SOC_k=0_tail_psi_tightened} is satisfied $\forall i \in \mathbb{I}_{[\mu+1,\mu+\nu+1]}$, which in turn also implies satisfaction $\forall i \in \mathbb{I}_{\geq \mu+1}$ (due to Step~\ref{algstep:nu}), meaning that $\mu$ satisfies~\eqref{eq:terminal_set_Smu_inclusion}.

Let
    \begin{equation} \label{eq:proofs:L_tilde_ineq}
        L_jy + \sqrt{\tilde{p}^j}\norm{\begin{bmatrix}
            (L_j \otimes (\mathbf{\Sigma}^{\mathrm{w}}_N)^{1/2})\tilde{y} \\ (\Sigma^{\mathrm{x}}_\infty)^{1/2} C_K^\top L_j^\top
        \end{bmatrix}} \leq b_j,
    \end{equation}
    \begin{equation} \label{eq:proofs:L_tilde}
        \bar{\mathcal{L}} = \myset{\begin{bmatrix} y \\\tilde{y}\end{bmatrix} \in \mathbb{R}^{d+Nnd}}{\eqref{eq:proofs:L_tilde_ineq}\text{ holds }\forall j \in \mathbb{I}_{[1,c]}},
    \end{equation}
    be defined analogously to~\eqref{eq:proofs:L_hat_ineq}-\eqref{eq:proofs:L_hat}, satisfying
    \begin{equation*}
        \mathbf{C}\mathbf{A}^i\zeta \in \bar{\mathcal{L}} \iff (z, \psi) \in \bar{\mathcal{D}}_i,
    \end{equation*}
    that is,
    \begin{equation*}
        \mathbf{C}\mathbf{A}^i\zeta \in \bar{\mathcal{L}},\ \forall i \in \mathbb{I}_{[0,\nu]} \iff (z, \psi) \in \bar{\mathcal{S}}_\nu.
    \end{equation*}
    Note that $\bar{\mathcal{L}} \subseteq \hat{\mathcal{L}}$ is bounded due to $\hat{\mathcal{L}}$ being bounded. Moreover, $0 \in \mathrm{int}(\bar{\mathcal{L}})$ due to Assumption~\ref{ass:L}. Finally, $(\mathbf{C}, \mathbf{A})$ is observable and $\mathbf{A}$ is Schur, and thus~\cite[Theorem~4.1]{gilbert1991linear} guarantees that $\exists \nu \in \mathbb{I}_{\geq 0}$ such that $\bar{\mathcal{S}}_\infty = \bar{\mathcal{S}}_\nu$.
    In other words,
    \begin{equation} \label{eq:nu_equivalence}
        \mathbf{C}\mathbf{A}^i\zeta \in \bar{\mathcal{L}},\ \forall i \in \mathbb{I}_{[0,\nu]} \iff \mathbf{C}\mathbf{A}^i\zeta \in \bar{\mathcal{L}},\ \forall i \in \mathbb{I}_{\geq 0}.
    \end{equation}
    According to~\cite[Algorithm~3.1]{gilbert1991linear}, $\nu$ is the smallest non-negative integer for which $\bar{\mathcal{S}}_\nu \subseteq \bar{\mathcal{S}}_{\nu+1}$, i.e., we can find $\nu$ by finding the smallest non-negative integer satisfying
    \begin{equation} \label{eq:nu_containment}
         \bar{\mathcal{S}}_\nu \subseteq \bar{\mathcal{D}}_{\nu+1},
    \end{equation}
    which is exactly Step~\ref{algstep:nu} of Algorithm~\ref{alg:terminal_set}.
    \par After obtaining $\nu$, we know that~\eqref{eq:nu_equivalence} holds for any $\zeta \in \mathbb{R}^{Nn^2+n}$, meaning that it also holds for $\mathbf{A}^{\mu+1}\xi,\ \xi \in \mathbb{R}^{Nn^2+n}$ (where $\mu \in \mathbb{I}_{\geq 0}$):
    \begin{equation*} \label{eq:mu_nu_equivalence}
        \mathbf{C}\mathbf{A}^i\xi \in \bar{\mathcal{L}},\forall i \in \mathbb{I}_{[\mu+1,\mu+\nu+1]} \iff \mathbf{C}\mathbf{A}^i\xi \in \bar{\mathcal{L}},\forall i \in \mathbb{I}_{\geq \mu+1},
    \end{equation*}
    or equivalently,
    \begin{equation*}
        \bigcap_{i=\mu+1}^{\nu+\mu+1} \bar{\mathcal{D}}_i = \bigcap_{i=\mu+1}^\infty \bar{\mathcal{D}}_i.
    \end{equation*}
    Consequently, the set inclusion
    \begin{equation} \label{eq:mu_containment}
        \mathcal{S}_\mu \subseteq \bigcap_{i=\mu+1}^{\nu+\mu+1} \bar{\mathcal{D}}_i
    \end{equation}
    in Step~\ref{algstep:mu} of Algorithm~\ref{alg:terminal_set} is equivalent to checking the original set inclusion~\eqref{eq:terminal_set_Smu_inclusion} in Theorem~\ref{thm:terminal_set}, concluding the proof.
\end{proof}

\subsection{S-procedure to find \texorpdfstring{$\nu$}{v} in Step~\ref{algstep:nu} of Algorithm~\ref{alg:terminal_set}} \label{app:finding_nu}
In the following, it is shown how to use the S-procedure to formulate a sufficient condition for the set containment~\eqref{eq:nu_containment}. For any given $i$, $\mathbf{C}\mathbf{A}^i\zeta \in \bar{\mathcal{L}}$ is 
equivalent to the following pair of inequalities:
\begin{equation} \label{eq:L_quadratic}
    \begin{split} 
        \zeta^\top (\mathbf{A}^{i})^\top \mathbf{C}^\top F_j \mathbf{C} \mathbf{A}^i \zeta + 2f_j^\top\mathbf{C}\mathbf{A}^i \zeta + \varphi_j &\geq 0, \\
        2g_j^\top \mathbf{C} \mathbf{A}^i \zeta + b_j &\geq 0,
    \end{split}
\end{equation}
$\forall j \in \mathbb{I}_{[1,c]}$, where
\begin{equation} \label{eq:Ffg}
    \begin{split}
        F_j &= \begin{bmatrix}
        L_j^\top L_j & \\ & -\tilde{p}^j L_j^\top L_j \otimes \mathbf{\Sigma}^\mathrm{w}_N
        \end{bmatrix},\\
        f_j^\top &= \begin{bmatrix}
            -b_jL_j & 0
        \end{bmatrix}, \\
        \varphi_j &= b_j^2 - \tilde{p}^jL_jC_K\Sigma^\mathrm{x}_\infty C_K^\top L_j^\top, \\
        g_j^\top &= \frac{1}{2}\begin{bmatrix} -L_j & 0
        \end{bmatrix}.
    \end{split}
\end{equation}
Thus, the goal is to find $\nu$ such that~\eqref{eq:L_quadratic} $ \forall i \in \mathbb{I}_{[0, \nu]},\, \forall j \in \mathbb{I}_{[1,c]}$ implies~\eqref{eq:L_quadratic} for $i=\nu+1,\, \forall j \in \mathbb{I}_{[1,c]}$.
According to the S-procedure for quadratic functions and nonstrict inequalities (see~\cite[Section~2.6.3]{boyd1994linear}), a sufficient condition for this implication is the existence of non-negative scalars $\alpha_{k,l}^{(j)}, \beta_{k,l}^{(j)}, \gamma_{k,l}^{(j)}, \delta_{k,l}^{(j)}$ ($k \in \mathbb{I}_{[0,\nu]}$, $l \in \mathbb{I}_{[1,c]}$) such that the following LMIs hold for all $j \in \mathbb{I}_{[1,c]}$:
\begin{equation*} \label{eq:nu_LMI_f}
    \begin{split}
        &\begin{bmatrix}
            (\mathbf{A}^{\nu+1})^\top \mathbf{C}^\top F_j \mathbf{C} \mathbf{A}^{\nu+1} & \mathbf{A}^{\nu+1\top} \mathbf{C}^\top f_j\\ f_j^\top\mathbf{C}\mathbf{A}^{\nu+1} & \varphi_j
        \end{bmatrix} \\
        &\ -\sum_{k=0}^\nu \sum_{l=1}^c \alpha_{k,l}^{(j)} \begin{bmatrix}
            (\mathbf{A}^{k})^\top \mathbf{C}^\top F_l \mathbf{C} \mathbf{A}^k & (\mathbf{A}^{k})^\top \mathbf{C}^\top f_l\\ f_l^\top\mathbf{C}\mathbf{A}^k & \varphi_l
        \end{bmatrix} \\
        &\ -\sum_{k=0}^\nu \sum_{l=1}^c \beta_{k,l}^{(j)} \begin{bmatrix}
            0 & (\mathbf{A}^{k})^\top \mathbf{C}^\top g_l\\ g_l^\top\mathbf{C}\mathbf{A}^k & b_l
        \end{bmatrix}\succeq 0,
        \end{split}
\end{equation*}
and
\begin{equation*} \label{eq:nu_LMI_g}
    \begin{split}
        &\begin{bmatrix}
            0 & (\mathbf{A}^{\nu+1})^\top \mathbf{C}^\top g_j\\ g_j^\top\mathbf{C}\mathbf{A}^{\nu+1} & b_j
        \end{bmatrix} \\
        &\ -\sum_{k=0}^\nu \sum_{l=1}^c \gamma_{k,l}^{(j)} \begin{bmatrix}
            (\mathbf{A}^{k})^\top \mathbf{C}^\top F_l \mathbf{C} \mathbf{A}^k & (\mathbf{A}^{k})^\top \mathbf{C}^\top f_l\\ f_l^\top\mathbf{C}\mathbf{A}^k & \varphi_l
        \end{bmatrix} \\
        &\ -\sum_{k=0}^\nu \sum_{l=1}^c \delta_{k,l}^{(j)} \begin{bmatrix}
            0 & (\mathbf{A}^{k})^\top \mathbf{C}^\top g_l\\ g_l^\top\mathbf{C}\mathbf{A}^k & b_l
        \end{bmatrix}\succeq 0.
        \end{split}
\end{equation*}

\subsection{S-procedure to find \texorpdfstring{$\mu$}{u} in Step~\ref{algstep:mu} of Algorithm~\ref{alg:terminal_set}} \label{app:finding_mu}
Similarly to~\eqref{eq:L_quadratic}, $(z, \psi) \in \mathcal{S}_\mu$ is equivalent to the following pair of inequalities:
\begin{equation} \label{eq:mu_L_quadratic}
    \begin{split} 
        \zeta^\top (\mathbf{A}^{i})^\top \mathbf{C}^\top F_j \mathbf{C} \mathbf{A}^i \zeta + 2f_j^\top\mathbf{C}\mathbf{A}^i \zeta + \phi_{i,j} &\geq 0, \\
        2g_j^\top \mathbf{C} \mathbf{A}^i \zeta + b_j &\geq 0,
    \end{split}
\end{equation}
$\forall j \in \mathbb{I}_{[1,c]}$ and $\forall i \in \mathbb{I}_{[0,\mu]}$, where $F_j$, $f_j$, and $g_j$ are defined in~\eqref{eq:Ffg}, and
\begin{equation*}
    \phi_{i,j} = b_j^2 - \tilde{p}^j L_jC_K\Sigma^\mathrm{x}_iC_K^\top L_j^\top.
\end{equation*}
Once again, the S-procedure provides a sufficient condition for~\eqref{eq:mu_L_quadratic}. If there exist non-negative scalars $\alpha_{k,l}^{(i,j)}, \beta_{k,l}^{(i,j)}, \gamma_{k,l}^{(i,j)}, \delta_{k,l}^{(i,j)}$ ($k \in \mathbb{I}_{[0,\mu]}$, $l \in \mathbb{I}_{[1,c]}$) such that the following LMIs hold $\forall j \in \mathbb{I}_{[1,c]}, \forall i \in \mathbb{I}_{[\mu+1,\mu+\nu+1]}$
\begin{equation*} \label{eq:mu_LMI_f}
    \begin{split}
        &\begin{bmatrix}
            (\mathbf{A}^{i})^\top \mathbf{C}^\top F_j \mathbf{C} \mathbf{A}^i & (\mathbf{A}^{i})^\top \mathbf{C}^\top f_j\\ f_j^\top\mathbf{C}\mathbf{A}^i & \varphi_j
        \end{bmatrix} \\
        &\ -\sum_{k=0}^\mu \sum_{l=1}^c \alpha_{k,l}^{(i,j)} \begin{bmatrix}
            (\mathbf{A}^{k})^\top \mathbf{C}^\top F_l \mathbf{C} \mathbf{A}^k & (\mathbf{A}^{k})^\top \mathbf{C}^\top f_l\\ f_l^\top\mathbf{C}\mathbf{A}^k & \phi_{l,k}
        \end{bmatrix} \\
        &\ -\sum_{k=0}^\mu \sum_{l=1}^c \beta_{k,l}^{(i,j)} \begin{bmatrix}
            0 & (\mathbf{A}^{k})^\top \mathbf{C}^\top g_l\\ g_l^\top\mathbf{C}\mathbf{A}^k & b_l
        \end{bmatrix}\succeq 0,
        \end{split}
\end{equation*}
and
\begin{equation*} \label{eq:mu_LMI_g}
    \begin{split}
        &\begin{bmatrix}
            0 & (\mathbf{A}^{i})^\top \mathbf{C}^\top g_j\\ g_j^\top\mathbf{C}\mathbf{A}^i & b_j
        \end{bmatrix} \\
        &\ -\sum_{k=0}^\mu \sum_{l=1}^c \gamma_{k,l}^{(i,j)} \begin{bmatrix}
            (\mathbf{A}^{k})^\top \mathbf{C}^\top F_l \mathbf{C} \mathbf{A}^k & (\mathbf{A}^{k})^\top \mathbf{C}^\top f_l\\ f_l^\top\mathbf{C}\mathbf{A}^k & \phi_{l,k}
        \end{bmatrix} \\
        &\ -\sum_{k=0}^\mu \sum_{l=1}^c \delta_{k,l}^{(i,j)} \begin{bmatrix}
            0 & (\mathbf{A}^{k})^\top \mathbf{C}^\top g_l\\ g_l^\top\mathbf{C}\mathbf{A}^k & b_l
        \end{bmatrix}\succeq 0,
        \end{split}
\end{equation*}
then $\mu$ satisfies~\eqref{eq:mu_containment}. 

\subsection{Proof of Proposition~\ref{prop:Rec-SMPC}} \label{app:reconditioning_proof}
\begin{proof}
    The proof is inspired by~\cite[Proposition~1]{wang2021recursive}. \par \textit{Part I: Recursive feasibility.} Consider the candidate solution
    \begin{equation*}
        \hat{\pi}_{t-k|k}(\cdot) = \pi^\star_{t-k+1|k-1}(w_{k-1}, \cdot),
    \end{equation*}
    i.e., it is the same as $\pi^\star_{t-k+1|k-1}$ but the first argument is fixed to the realized value of the disturbance $w_{k-1}$. Therefore,
    \begin{equation*}
        p(u_t| w_{0:k-1},\{\boldsymbol{\pi}^\mathrm{p}_k, \hat{\boldsymbol{\pi}}_k\}) = p(u_t | w_{0:k-1},\{\boldsymbol{\pi}^\mathrm{p}_{k-1}, \boldsymbol{\pi}_{k-1}^\star\}),
    \end{equation*}
    resulting in
    \begin{equation*}
        p(x_t| w_{0:k-1},\{\boldsymbol{\pi}^\mathrm{p}_k, \hat{\boldsymbol{\pi}}_k\}) = p(x_t | w_{0:k-1},\{\boldsymbol{\pi}^\mathrm{p}_{k-1}, \boldsymbol{\pi}_{k-1}^\star\}),
    \end{equation*}
    $\forall t \in \mathbb{I}_{\geq k}$. Consequently,
    \begin{align*}
         \mathrm{Pr}[(x_t,&\, u_t) \in \mathcal{C}^j | w_{0:k-1},\{\boldsymbol{\pi}^\mathrm{p}_k,\hat{\boldsymbol{\pi}}_k\}] =  \\ &= \mathrm{Pr}[(x_t, u_t) \in \mathcal{C}^j | w_{0:k-1},\{\boldsymbol{\pi}^\mathrm{p}_{k-1},\boldsymbol{\pi}_{k-1}^\star\}] = p^j_{t-k|k},
    \end{align*}
    $\forall j \in \mathbb{I}_{[1,c]}$, that is, the candidate solution satisfies all constraints in~\eqref{eq:Rec-SMPC}, proving recursive feasibility.
    \par \textit{Part II: Closed-loop chance constraint satisfaction.}
    Let $k \in \mathbb{I}_{> 0}$ and let $\mathbf{w}_k := w_{0:k-1}$. Then
    \begin{equation*}
        \begin{split}
            \mathrm{Pr}&[(x_t, u_t) \in \mathcal{C}^j | \{\boldsymbol{\pi}^\mathrm{p}_k, \boldsymbol{\pi}_k^\star\}]  \\&\stackrel{(*)}{=} \int \mathrm{Pr}[(x_t, u_t) \in \mathcal{C}^j | \mathbf{w}_k,\{\boldsymbol{\pi}^\mathrm{p}_k, \boldsymbol{\pi}_k^\star\}]\ p(\mathbf{w}_k)\ \mathrm{d}\mathbf{w}_k  \\
            &\stackrel{\eqref{eq:Rec-SMPC:constraints}}{\geq} \int p^j_{t-k|k}\ p(\mathbf{w}_k)\ \mathrm{d}\mathbf{w}_k \\
            &\stackrel{\eqref{eq:pjik}}{=} \int \mathrm{Pr}[(x_t, u_t) \in \mathcal{C}^j | \mathbf{w}_k,\{\boldsymbol{\pi}^\mathrm{p}_{k-1}, \boldsymbol{\pi}_{k-1}^\star\}]\ p(\mathbf{w}_k)\ \mathrm{d}\mathbf{w}_k  \\
            &\stackrel{(*)}{=} \mathrm{Pr}[(x_t, u_t) \in \mathcal{C}^j| \{\boldsymbol{\pi}^\mathrm{p}_{k-1}, \boldsymbol{\pi}_{k-1}^\star\}],
        \end{split}
    \end{equation*}
    $\forall j \in \mathbb{I}_{[1,c]},\ \forall t \in \mathbb{I}_{\geq k}$, where $(*)$ is the law of total probability. Applying this procedure recursively for $t=k$ 
    and using ~\eqref{eq:initial_prob} yields
    \begin{align*}
        \mathrm{Pr}[(x_k, u_k) \in \mathcal{C}^j] &= \mathrm{Pr}[(x_k, u_k) \in \mathcal{C}^j | \boldsymbol{\pi}_k^\mathrm{p}] \\ &\geq \mathrm{Pr}[(x_k, u_k) \in \mathcal{C}^j | \boldsymbol{\pi}_0^\star]  \geq p^j,\forall j \in \mathbb{I}_{[1,c]}. \qedhere
    \end{align*}
\end{proof}

\subsection{Proof of Theorem~\ref{thm:RHC}} \label{app:RHC_proof}
\begin{proof}
    \textit{Part I: Recursive feasibility.} Given the optimal solution of~\eqref{eq:RHC} at time step $k \in \mathbb{I}_{> 0}$ (or that of~\eqref{eq:DMC} at time $k = 0$), consider the candidate solution $\hat{\mathbf{z}}_k$, $\hat{\mathbf{v}}_k$, $\hat{\mathbf{\Phi}}^\mathrm{x}_{:|k}$, and $\hat{\mathbf{\Phi}}^\mathrm{u}_{:|k}$, defined in~\eqref{eq:hat_zvPsi_def}-\eqref{eq:hat_zNPsiN_def}. As the constraints of~\eqref{eq:RHC} are constructed based on this exact candidate sequence, they are satisfied with equality, proving recursive feasibility.
    \par \textit{Part II: Closed-loop chance constraint satisfaction.} 
    The proof follows the arguments in Part II of the proof of Proposition~\ref{prop:Rec-SMPC}, by proving that the proposed method ensures
    \begin{equation} \label{eq:CS_requirement}
        \mathrm{Pr}[(x^\star_{i|k}, u^\star_{i|k}) \in \mathcal{C}^j] \geq p^j_{i|k},\ \forall j \in \mathbb{I}_{[1,c]}.
    \end{equation}
    \par It is known that for $a \in \mathbb{R}$ and scalar Gaussian random variable $y$ with mean $\mu_\mathrm{y}$ and variance $\sigma_\mathrm{y}^2$,
    \begin{equation} \label{eq:Gaussian_CDF}
        \mathrm{Pr}[y \leq a] = \Phi\left(\frac{a - \mu_\mathrm{y}}{\sigma_\mathrm{y}}\right),
    \end{equation}
    where $\Phi$ denotes the cumulative distribution function of the standard normal distribution. Given the Gaussian disturbance~\eqref{eq:system}, it can be seen that 
    $p(x^\star_{i|k}, u^\star_{i|k})$ and $p(x^\star_{i+1|k-1}, u^\star_{i+1|k-1})$
    are Gaussian distributed, with means and variances as follows:
    \begin{align*}
        \mathbb{E}[[x_{i|k}^{\star\top}, u_{i|k}^{\star\top}]^\top] &= [z_{i|k}^{\star\top}, v_{i|k}^{\star\top}]^\top \\
         \mathrm{Var}[[x_{i|k}^{\star\top}, u_{i|k}^{\star\top}]^\top] &= \begin{bmatrix}
            \mathbf{\Phi}^{\mathrm{x}\star}_{i|k} \\ \mathbf{\Phi}^{\mathrm{u}\star}_{i|k}
        \end{bmatrix} \mathbf{\Sigma}^\mathrm{w}_i [*]^\top,
    \end{align*}
    $\forall i \in \mathbb{I}_{[0:N-1]}$, and in the tail of the horizon, 
    \begin{equation*}
        \mathbb{E}[[x_{N+i|k}^{\star\top}, u_{N+i|k}^{\star\top}]^\top] = [I_n, K^\top]^\top A_K^iz^\star_{N|k},
    \end{equation*}
    \begin{equation*}
    \begin{split}
        \mathrm{Var}[[&x_{N+i|k}^{\star\top}, u_{N+i|k}^{\star\top}]^\top] = \\ &\begin{bmatrix}
            I_n \\ K
        \end{bmatrix}
        \begin{bmatrix}
            A_K^i\mathbf{\Phi}^{\mathrm{x}\star}_{N|k} & I_n
        \end{bmatrix} \begin{bmatrix}
            \mathbf{\Sigma}^\mathrm{w}_N & \\ & \Sigma^\mathrm{x}_i
        \end{bmatrix} [*]^\top,
    \end{split}
    \end{equation*}
    $\forall i \in \mathbb{I}_{\geq 0}$. Similarly,
    \begin{align*}
        \mathbb{E}[[x^{\star\top}_{i+1|k-1} u^{\star\top}_{i+1|k-1}]^\top] &= [\hat{z}^\top_{i|k}, \hat{v}_{i|k}^\top]^\top \\
        \mathrm{Var}[[x_{i+1|k-1}^{\star\top}, u_{i+1|k-1}^{\star\top}]^\top] &= \begin{bmatrix}
            \hat{\mathbf{\Phi}}^\mathrm{x}_{i|k} \\ \hat{\mathbf{\Phi}}^\mathrm{u}_{i|k} 
        \end{bmatrix} \mathbf{\Sigma}^\mathrm{w}_i [*]^\top
    \end{align*}
    $\forall i \in \mathbb{I}_{[0:N-1]}$, and finally,
    \begin{equation*}
        \mathbb{E}[[x^{\star\top}_{N+i+1|k-1}, u^{\star\top}_{N+i+1|k-1}]^\top] = [I_n, K^\top]^\top A_K^i\hat{z}_{N|k}
    \end{equation*}
    \begin{equation*}
    \begin{split}
        \mathrm{Var}[[&x_{N+i+1|k-1}^{\star\top}, u_{N+i+1|k-1}^{\star\top}]^\top] = \\ 
        &\begin{bmatrix}
            I_n \\ K
        \end{bmatrix}
        \begin{bmatrix}
            A_K^i\hat{\mathbf{\Phi}}^\mathrm{x}_{N|k} & I_n
        \end{bmatrix} \begin{bmatrix}
            \mathbf{\Sigma}^\mathrm{w}_N & \\ & \Sigma^\mathrm{x}_i
        \end{bmatrix} [*]^\top
    \end{split}
    \end{equation*}
    $\forall i \in \mathbb{I}_{\geq 0}$. 
    \par First, consider $i \in \mathbb{I}_{[0,N-1]}$. Due to~\eqref{eq:Gaussian_CDF} combined with the fact that $\Phi$ is strictly monotonically increasing and using the derived formulae for the means and variances, we can see that~\eqref{eq:CS_requirement} is equivalent to
    \begin{equation*}
        \frac{b_j-G_j z^\star_{i|k} -H_j v^\star_{i|k}}{\norm{(\mathbf{\Sigma}^{\mathrm{w}}_i)^{1/2} \begin{bmatrix} (\mathbf{\Phi}^{\mathrm{x}\star}_{i|k})^\top & (\mathbf{\Phi}^{\mathrm{u}\star}_{i|k})^\top\end{bmatrix} \begin{bmatrix} G_j^\top \\ H_j^\top \end{bmatrix}}} \geq \alpha^j_{i|k};
    \end{equation*}
    in the non-degenerate case (i.e., when~\eqref{eq:pjik_degenerate} does not hold). In the degenerate case,~\eqref{eq:CS_requirement} vanishes whenever~\eqref{eq:previous_satisfied} does not hold, and otherwise it leads to the constraint set~\eqref{eq:Cijk1}. Thus,~\eqref{eq:RHC:constraints_N} ensures that~\eqref{eq:CS_requirement} holds $\forall k \in \mathbb{I}_{> 0}, \forall i \in \mathbb{I}_{[0,N-1]}$
    \par As for $i \in \mathbb{I}_{\geq N}$, similar arguments can be used to show that~\eqref{eq:CS_requirement} is equivalent to~\eqref{eq:SOC_k>0_tail}. Therefore,~\eqref{eq:RHC:constraints_terminal} implies~\eqref{eq:CS_requirement} $\forall k \in \mathbb{I}_{> 0}, \forall i \in \mathbb{I}_{\geq N}$. Furthermore,~\eqref{eq:DMC} ensures that~\eqref{eq:CS_requirement} holds $\forall i \in \mathbb{I}_{\geq 0}$ at $k=0$. Thus~\eqref{eq:CS_requirement} holds $\forall k \in \mathbb{I}_{\geq 0}, \forall i \in \mathbb{I}_{\geq 0}$ and hence closed-loop chance constraint satisfaction follows analogously to Proposition~\ref{prop:Rec-SMPC}.
    \par \textit{Part III: Bounded asymptotic average cost}.
    Let $J^\star_k$ denote the optimal cost of~\eqref{eq:DMC} and~\eqref{eq:RHC}, for $k=0$ and $k \in \mathbb{I}_{> 0}$, respectively, and define $\hat{J}_k$ as the cost associated with the candidate solution:
    \begin{align*}
        J^\star_k &= \sum_{i=0}^{N-1}\ell(z^\star_{i|k}, v^\star_{i|k}) \\
        &+ \sum_{i=1}^{N-1}(\mathrm{tr}(Q\mathbf{\Phi}^{\mathrm{x}\star}_{i|k}\mathbf{\Sigma}^\mathrm{w}_i(\mathbf{\Phi}^{\mathrm{x}\star}_{i|k})^\top) + \mathrm{tr}(R\mathbf{\Phi}^{\mathrm{u}\star}_{i|k}\mathbf{\Sigma}^\mathrm{w}_i(\mathbf{\Phi}^{\mathrm{u}\star}_{i|k})^\top))  \\ 
        &+ \ell_\mathrm{f}(z^\star_{N|k}) + \mathrm{tr}(P\mathbf{\Phi}^{\mathrm{x}\star}_{N|k}\mathbf{\Sigma}^\mathrm{w}_N(\mathbf{\Phi}^{\mathrm{x}\star}_{N|k})^\top),
    \end{align*}
    and
    \begin{align*}
        \hat{J}&_{k+1} = \sum_{i=0}^{N-1}\ell(\hat{z}_{i|k+1},\hat{v}_{i|k+1}) \\
        &+ \sum_{i=1}^{N-1}(\mathrm{tr}(Q\hat{\mathbf{\Phi}}^\mathrm{x}_{i|k+1}\mathbf{\Sigma}^\mathrm{w}_i\hat{\mathbf{\Phi}}^{\mathrm{x}\top}_{i|k+1}) + \mathrm{tr}(R\hat{\mathbf{\Phi}}^\mathrm{u}_{i|k+1}\mathbf{\Sigma}^\mathrm{w}_i\hat{\mathbf{\Phi}}^{\mathrm{u}\top}_{i|k+1}))  \\ 
        & + \ell_\mathrm{f}(\hat{z}_{N|k+1}) +\mathrm{tr}(P\hat{\mathbf{\Phi}}^\mathrm{x}_{N|k+1}\mathbf{\Sigma}^\mathrm{w}_N\hat{\mathbf{\Phi}}^{\mathrm{x}\top}_{N|k+1}).
    \end{align*}
    Then,
    \begin{align} \label{eq:proofs:beg_difference}
        \mathbb{E}_{w_k}[J^\star_{k+1} | w_{0:k-1}] - J^\star_k  \leq \mathbb{E}_{w_k}[\hat{J}_{k+1} | w_{0:k-1}] - J^\star_k.
    \end{align}
    Using~\eqref{eq:hat_zvPsi_def}-\eqref{eq:hat_zNPsiN_def}, it can be seen that
    \begin{align*}
        &\underset{w_k}{\mathbb{E}}[\ell(\hat{z}_{i|k+1}, \hat{v}_{i|k+1}) | w_{0:k-1}] = \ell(z^\star_{i+1|k}, v^\star_{i+1|k})\\&+ \mathrm{tr}(Q\Phi^{\mathrm{x}\star}_{i+1,1|k}\Sigma^\mathrm{w}(\Phi^{\mathrm{x}\star}_{i+1,1|k})^\top) + \mathrm{tr}(R\Phi^{\mathrm{u}\star}_{i+1,1|k}\Sigma^\mathrm{w}(\Phi^{\mathrm{u}\star}_{i+1,1|k})^\top),
    \end{align*}
    $\forall i \in \mathbb{I}_{[0,N-1]}$, and
    \begin{equation*}
    \begin{split}
        \underset{w_k}{\mathbb{E}}[\ell_\mathrm{f}(\hat{z}_{N|k+1})|w_{0:k-1}]& = \ell_\mathrm{f}(A_Kz^\star_{N|k}) \\ +& \mathrm{tr}(PA_K\Phi^{\mathrm{x}\star}_{N,1|k}\Sigma^\mathrm{w}(\Phi^{\mathrm{x}\star}_{N,1|k})^\top A_K^\top).
    \end{split}
    \end{equation*}
    Therefore, using the cyclic property and linearity of the trace operator,
    \begin{align}
        \mathbb{E}_{w_k}&[\hat{J}_{k+1} | w_{0:k-1}] - J^\star_k  = \mathrm{tr}(P\Sigma^\mathrm{w}) - \ell(z^\star_{0|k}, v^\star_{0|k}) \nonumber\\
        + &\mathrm{tr}((Q + K^\top R K + A_K^\top P A_K - P)\mathbf{\Phi}^{\mathrm{x}\star}_{N|k}\mathbf{\Sigma}^\mathrm{w}_N(\mathbf{\Phi}^{\mathrm{x}\star}_{N|k})^\top) \nonumber \\
        + &\nnorm{z^\star_{N|k}}_{Q + K^\top R K + A_K^\top P A_K - P}^2 \label{eq:proofs:end_difference} \\
        + &(r^\top K+q^\top - p_\mathrm{f}^\top(I_n-A_K))z_{N|k}^\star, \nonumber
    \end{align}
    where the last three terms vanish since $p_\mathrm{f}$ satisfies~\eqref{eq:p_Lyapunov} and $P$ satisfies the Lyapunov equation~\eqref{eq:P_Lyapunov}. Thus,~\eqref{eq:RHC:init},~\eqref{eq:RHC:control_law}, and~\eqref{eq:proofs:beg_difference} combined with~\eqref{eq:proofs:end_difference} results in the cost decrease
    \begin{equation} \label{eq:proofs:cost_decrease}
        \mathbb{E}_{w_k}[J^\star_{k+1} | w_{0:k-1}] - J^\star_k \leq\mathrm{tr}(P\Sigma^\mathrm{w}) - \ell(x_k,u_k).
    \end{equation}
    Following standard arguments from the SMPC literature using the law of total expectation and uniform lower boundedness of the cost (e.g.~\cite[Corollary~1]{hewing2020recursively}),~\eqref{eq:proofs:cost_decrease} leads to~\eqref{eq:lavg}. 
\end{proof}
    
\bibliographystyle{IEEEtran} 
\bibliography{Literature} 

\begin{IEEEbiography}[{\includegraphics[width=1in,height=1.25in,clip,keepaspectratio]{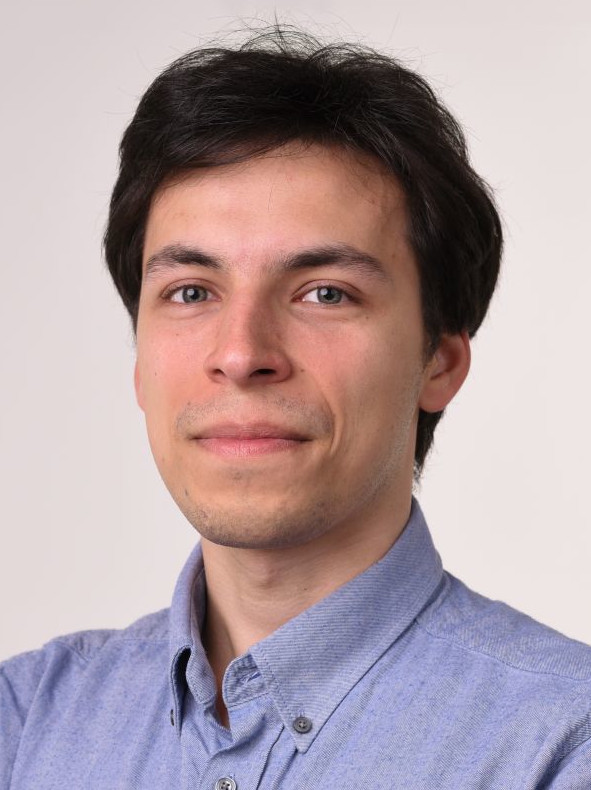}}]{Marcell Bartos}  \looseness -1 
 is a Ph.D. student under the supervision of Prof. Melanie Zeilinger and Prof. Florian Dörfler at ETH Zürich. He obtained his B.Sc. degree in Mechatronics Engineering from the Budapest University of Technology and Economics in 2021, and his M.Sc. degree in Robotics, Systems and Control at ETH Zürich in 2024. His areas of interest are model predictive control, adaptive control, and online learning for control.
\end{IEEEbiography}

\begin{IEEEbiography}[{\includegraphics[width=1in,height=1.25in,clip,keepaspectratio]{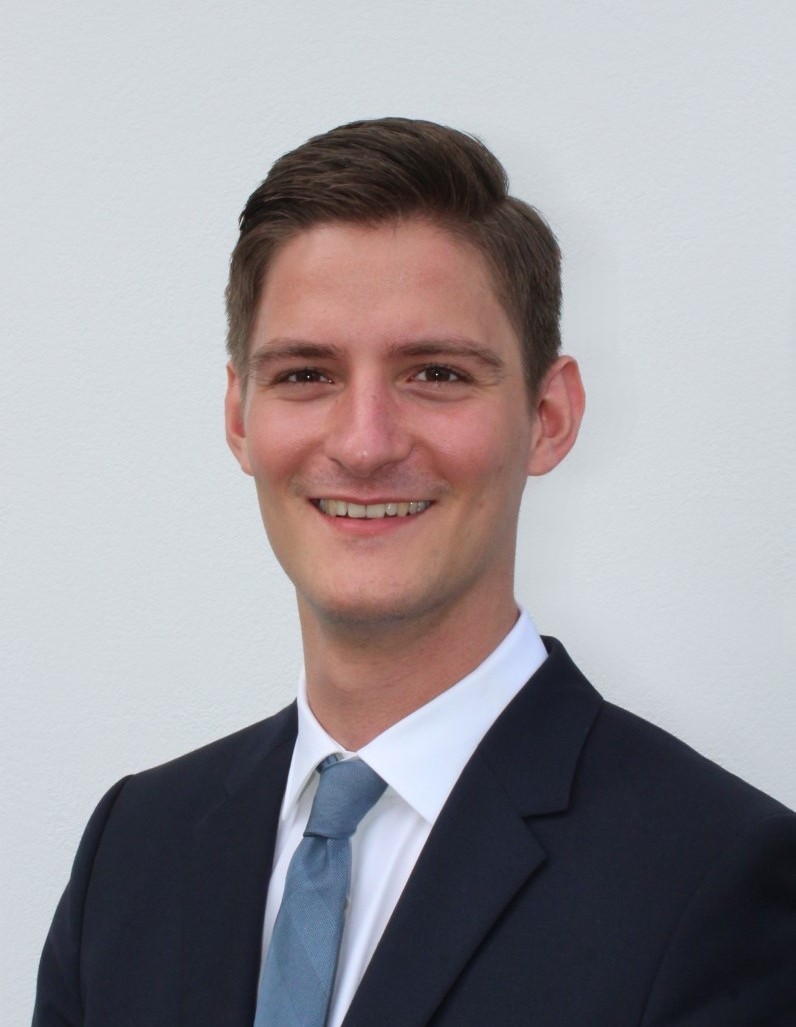}}]{Alexandre Didier} \looseness -1 is a Postdoctoral Researcher at the Institute for Dynamic Systems and Control (IDSC) at ETH Zurich. He received his MSc. in Robotics, Systems and Control from ETH Zurich in 2020 and his Ph.D. in Control Theory from ETH Zurich in 2025. His research interests lie especially in model predictive control and regret minimisation for safety-critical and uncertain systems.
\end{IEEEbiography}

\begin{IEEEbiography}[{\includegraphics[width=1in,height=1.25in,clip,keepaspectratio]{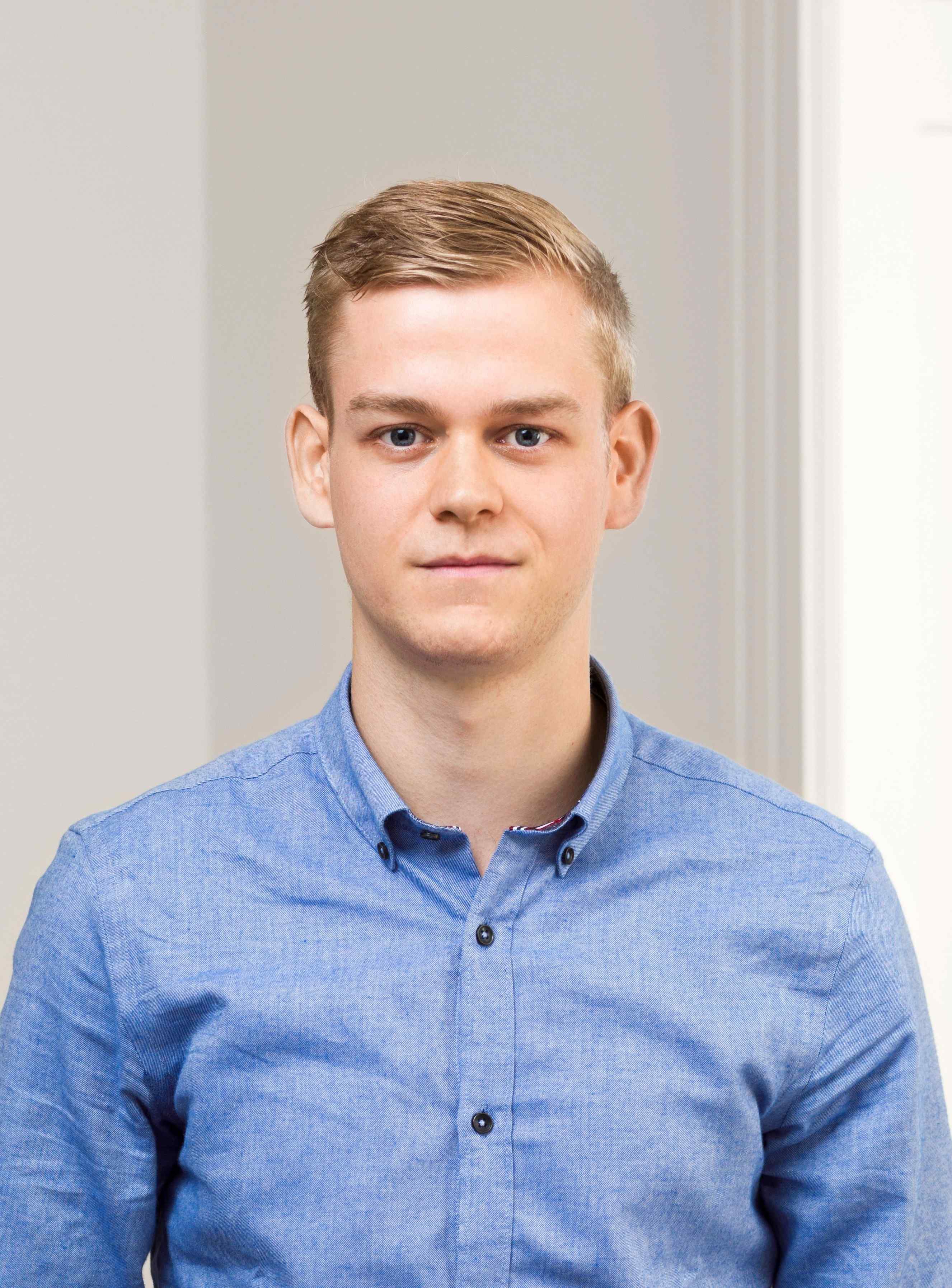}}]{Jerome Sieber} \looseness -1 is a Postdoctoral Researcher in Prof. Dr. Melanie Zeilinger's group at ETH Zurich, Switzerland. He received his MSc. in Robotics, Systems and Control from ETH Zurich in 2019, and his Ph.D. in Control Theory from ETH Zurich, in 2024. His research focuses on robust model predictive control, novel neural network architectures based on state-space representations, and control of deep sequence models.
\end{IEEEbiography}

\begin{IEEEbiography}[{\includegraphics[width=1in,height=1.25in,clip,keepaspectratio]{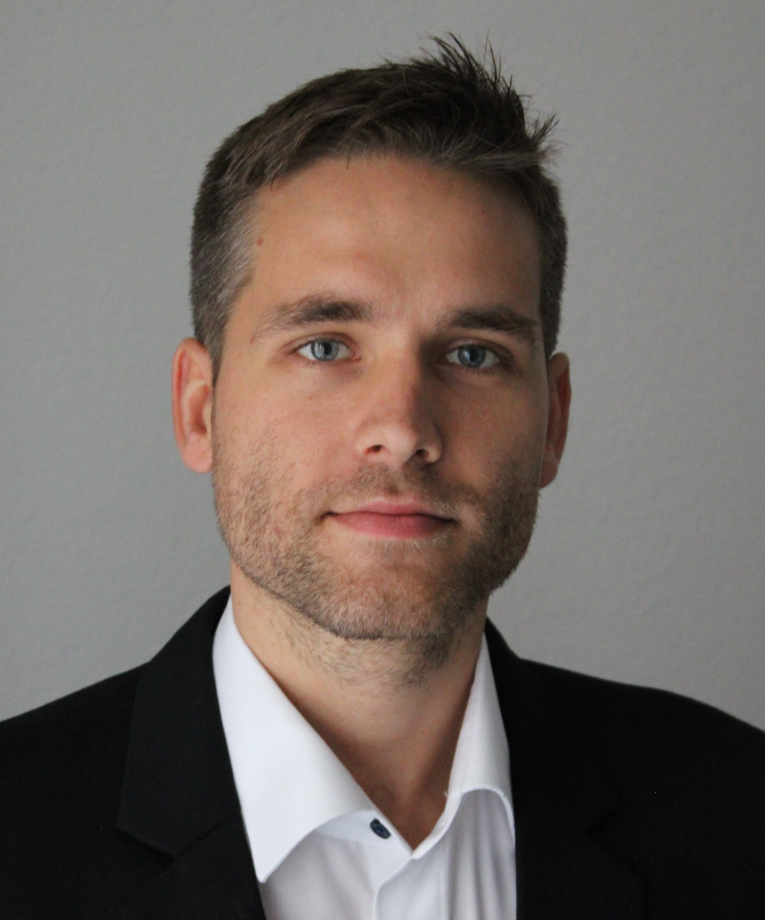}}]{Johannes K\"ohler} \looseness -1  is an Assistant Professor at Imperial College London. 
He received the Ph.D. degree from the University of Stuttgart, Germany, in 2021. 
From 2021 to 2025, he was a postdoctoral researcher at ETH Zurich, Switzerland.
He has received several awards including the 2021 European Systems \& Control PhD Thesis Award, the IEEE CSS George S. Axelby Outstanding Paper Award 2022, and the Journal of Process Control Paper Award 2023. 
His research interests include data-driven models and robust predictive control with applications to robotics, autonomous systems, and biomedical problems. 
\end{IEEEbiography}
\begin{IEEEbiography}[{\includegraphics[width=1in,height=1.25in,clip,keepaspectratio]{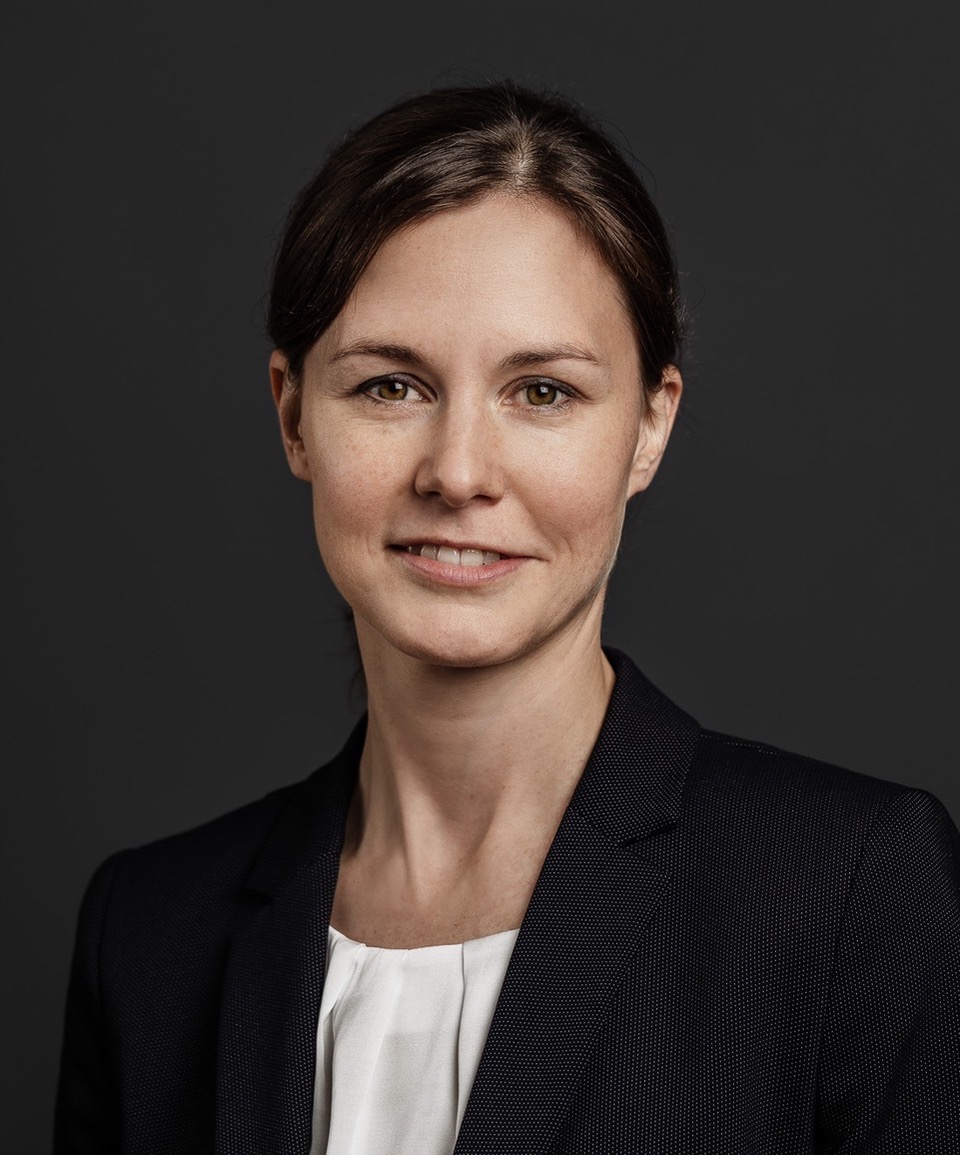}}]{Melanie N. Zeilinger}  \looseness -1 is an Associate Professor at ETH Zurich, Switzerland. She received the Diploma degree in engineering cybernetics from the University of Stuttgart, Germany, in 2006, and the Ph.D. degree with honors in electrical engineering from ETH Zurich, Switzerland, in 2011. From 2011 to 2012 she was a Postdoctoral Fellow with the Ecole Polytechnique Federale de Lausanne (EPFL), Switzerland. She was a Marie Curie Fellow and Postdoctoral Researcher with the Max Planck Institute for Intelligent Systems, Tübingen, Germany until 2015 and with the Department of Electrical Engineering and Computer Sciences at the University of California at Berkeley, CA, USA, from 2012 to 2014. From 2018 to 2019 she was a professor at the University of Freiburg, Germany. She was awarded the ETH medal for her PhD thesis, an SNF Professorship, the ETH Golden Owl for exceptional teaching in 2022 and the European Control Award in 2023. Her research interests include learning-based control with applications to robotics and human-in-the-loop control.
\end{IEEEbiography}
\end{document}